 \documentclass[journal]{IEEEtran}

\usepackage{algorithmic}

\usepackage{hyperref}
\let\svthefootnote\thefootnote
\newcommand\freefootnote[1]{%
  \let\thefootnote\relax%
  \footnotetext{#1}%
  \let\thefootnote\svthefootnote%
}

\usepackage[T1]{fontenc}

\usepackage{graphicx}
\usepackage[noadjust]{cite}
\usepackage{mcite}
\usepackage{amsfonts,helvet}
\usepackage{fancyhdr}
\usepackage{threeparttable}
\usepackage{epsf,epsfig}
\usepackage{amsthm}
\usepackage{amsmath}
\usepackage{siunitx}
\usepackage{amssymb}
\usepackage{dsfont}
\usepackage{subfigure}
\usepackage{color}
\usepackage{enumerate}
\usepackage{hyperref}
\usepackage{cancel}
\usepackage{bbm}
\usepackage{dsfont}
\usepackage[subnum]{cases}
\usepackage{adjustbox}
\usepackage[linesnumbered,ruled]{algorithm2e}
\usepackage{multicol}
\usepackage[english]{babel}
\usepackage{amsmath}
\usepackage{boldline}
\usepackage{graphicx}

\newtheorem{theorem}{Theorem}

\newtheorem{corollary}{Corollary}

\newtheorem{proposition}{Proposition}
\newtheorem{remark}{Remark}

\newmuskip\pFqmuskip


\def\be{{\bf e}}
\def\bff{{\bf f}}
\def\bg{{\bf g}}
\def\bh{{\bf h}}

\def\bn{{\bf n}}

\def\bq{{\bf q}}

\def\bs{{\bf s}}

\def\bu{{\bf u}}

\def\bw{{\bf w}}
\def\bx{{\bf x}}
\def\by{{\bf y}}

\def\bA{{\bf A}}

\def\bC{{\bf C}}
\def\bD{{\bf D}}
\def\bE{{\bf E}}

\def\bG{{\bf G}}
\def\bH{{\bf H}}
\def\bI{{\bf I}}

\def\bK{{\bf K}}

\def\bM{{\bf M}}

\def\bR{{\bf R}}

\def\bW{{\bf W}}

\def\bZ{{\bf Z}}


\def\cC{\mbox{$\mathcal{C}$}}

\def\cL{\mbox{$\mathcal{L}$}}

\def\cN{\mbox{$\mathcal{N}$}}


\def\bbC{\mbox{$\mathbb{C}$}}

\def\bbE{\mbox{$\mathbb{E}$}}

\def\bbR{\mbox{$\mathbb{R}$}}

\def\ubg{\mbox{$\underline{\bf{g}}$}}

\def\ubw{\mbox{$\underline{\bf{w}}$}}

\def\btau{\boldsymbol{\tau}}
\def\bSigma{\boldsymbol{\Sigma}}

\def\bLambda{{\pmb\Lambda}}

\usepackage{array}

\makeatletter
\newcommand{\thickhline}{%
    \noalign {\ifnum 0=`}\fi \hrule height 1pt
    \futurelet \reserved@a \@xhline
}
\newcolumntype{"}{@{\hskip\tabcolsep\vrule width 1pt\hskip\tabcolsep}}

\setcounter{page}{1}
\setcounter{proposition}{0}
\IEEEoverridecommandlockouts

\title{
Coordinated Per-Antenna Power Minimization for Multicell Massive MIMO Systems with Low-Resolution Data Converters}
\author{
Yunseong Cho, {\it Graduate Student Member, IEEE,} Jinseok Choi, {\it Member, IEEE,}  \\
and Brian L. Evans, {\it Fellow, IEEE} \thanks{

Y. Cho and B. L. Evans are with the 6G@UT Research Center, Wireless Networking and Communication Group (WNCG), Dept. of Electrical and Computer Engineering, The University of Texas at Austin, Austin, TX 78701 USA. (e-mail: yscho@utexas.edu, bevans@ece.utexas.edu).

J. Choi is with School of Electrical Engineering, Korea Advanced Institute of Science and Technology (KAIST), Daejeon, South Korea
(e-mail: jinseok@kaist.ac.kr).
}
\thanks{
This work was supported in part by NVIDIA and AT\&T Labs, affiliates of the 6G@UT Research Center at UT Austin, by the National Research Foundation of Korea (NRF) grant funded by the Korea government (MSIT) (No. RS-2023-00219443), and by the MSIT (Ministry of Science and ICT), Korea, under the ITRC (Information Technology Research Center) support program (IITP-2023-RS-2023-00259991) supervised by the IITP (Institute for Information \& Communications Technology Planning \& Evaluation).
}
}

\begin{document}

\maketitle
\begin{abstract}
\freefootnote{This work was presented in part at the {\em IEEE Wireless Communications and Networking Conference (IEEE WCNC)}, Austin, Texas, April 2022 \cite{cho2022coordinated}.}
A multicell-coordinated beamforming solution for massive multiple-input multiple-output orthogonal frequency-division multiplexing (OFDM) systems is presented when employing low-resolution data converters and per-antenna level constraints.
For a more realistic deployment, we aim to find the downlink (DL) beamformer that minimizes the maximum power on transmit antenna array of each basestation under received signal quality constraints while minimizing per-antenna transmit power.
We show that strong duality holds between the primal DL formulation and its manageable Lagrangian dual problem which can be interpreted as the virtual uplink (UL) problem with adjustable noise covariance matrices.
For a fixed set of noise covariance matrices, we claim that the virtual UL solution is effectively used to compute the DL beamformer and noise covariance matrices can be subsequently updated with an associated subgradient.
Our primary contributions are then (1) formulating the quantized DL OFDM  antenna power minimax problem and deriving its associated dual problem,
(2) showing strong duality and interpreting the dual as a virtual quantized UL OFDM problem, and
(3) developing an iterative minimax algorithm based on the dual problem.
Simulations validate the proposed algorithm in terms of the maximum antenna transmit power and peak-to-average-power ratio.

\end{abstract}
\begin{IEEEkeywords}
Multicell-coordinated beamforming, transmit power minimax problem, uplink-downlink duality, low-resolution data converters, per-antenna power constraint.
\end{IEEEkeywords}

\section{Introduction}
\label{sec:intro}


Massive multiple-input-multiple-output (MIMO) has drawn attention for  future wireless communication systems because of its remarkable gain in spectral efficiency and capacity \cite{marzetta2010noncooperative}.
However, employing the massive number of power-hungry high-resolution analog-to-digital converters (ADCs) and digital-to-analog converters (DACs) results in prohibitively high power consumption.
Since the power consumption of data converters experiences exponential growth with an increase in the number of quantization bits \cite{walden1999analog,lee2008analog}, the utilization of low-resolution data converters emerges as a crucial strategy for reducing both hardware expenses and power usage of the radio frequency (RF) chains in the context of large antenna arrays \cite{ribeiro2018energy}.
Accordingly, the adoption of low-resolution quantizers has gathered momentum as a promising power-efficient alternative and has been widely investigated \cite{choi2016near,studer2016quantized,choi2019robust,cho2019one,choi2017resolution,  choi2019two,choi2020quantized,nguyen2021linear,yuan2020toward}.

In addition to power consumption, interference has emerged as a critical consideration in modern wireless systems.
Consequently, intra-cell and inter-cell interference as well as quantization error must be carefully considered when analyzing and designing power-efficient multicell communication networks to achieve a desired performance.
Moreover, for practical implementations, it is desirable to impose a per-antenna power constraint that restricts the transmit power of each antenna because the communication system can operate with more energy-efficient power amplifiers and prevent nonlinear distortion \cite{yu2007transmitter, dahrouj2010coordinated}.
In this regard, we investigate coordinated multipoint (CoMP) beamforming (BF) and power control (PC) problems in multicell and multiuser massive MIMO orthogonal frequency-division multiplexing (OFDM) systems with low-resolution data converters and per-antenna level power and quality-of-service constraints.



\subsection{Prior Work}
Low-resolution ADC architectures have been the subject of extensive research in recent years to provide power-efficient communications \cite{choi2016near, studer2016quantized,wang2017bayesian,wang2019reliable,choi2019robust,jacobsson2017throughput, zhang2016mixed, choi2017resolution, orhan2015low, xu2019uplink, choi2019two}.
In order to properly handle the severe nonlinearities in low-resolution ADCs, many studies have re-engineered essential wireless communication functions such as channel estimation and data detection \cite{choi2016near, studer2016quantized,wang2017bayesian,wang2019reliable, choi2019robust, jacobsson2017throughput}.
Since low-resolution data converters destroy the orthogonality of subcarriers in OFDM systems, novel OFDM channel estimation and symbol detection were developed and integrated into a turbo framework, justifying its feasibility and reliability in over-the-air experiments when using low-resolution ADCs
\cite{wang2017bayesian,wang2019reliable}.
In \cite{choi2016near}, a near maximum likelihood channel estimation and symbol detection were proposed for 1-bit ADCs while showing improved estimation accuracy compared to expectation-maximum estimators.
Maximum a-posteriori detection and channel
estimation with low-resolution ADCs showed that 4-bit ADCs are sufficient to achieve near-optimal
performance in massive MIMO OFDM systems \cite{studer2016quantized}.
To facilitate the learning of likelihood probabilities in 1-bit ADC maximum likelihood data detector, a robust learning with artificial noise was presented in \cite{choi2019robust}.
The authors in \cite{zhang2016mixed, liang2016mixed} developed data detectors for mixed-ADC systems that assign either 1-bit or infinite-resolution depending on channel gain.
In addition, a resolution-adaptive ADC system with a bit-allocation algorithm was proposed while outperforming the conventional fixed-precision low-resolution ADC systems in terms of both spectral and energy efficiency \cite{choi2017resolution}.
For analytic tractability, the severe non-linearity of the low-resolution quantizer was linearized using Bussgang decomposition \cite{jacobsson2017throughput} and the additive quantization noise model (AQNM) \cite{orhan2015low, xu2019uplink, choi2019two, choi2020quantized}, while producing useful algorithms and insightful analytical results.

For downlink transmission, a number of works have reduced hardware cost using low-resolution DACs
\cite{jacobsson2017quantized,li2017downlink,park2021construction}.
Using Bussgang decomposition, the authors in \cite{jacobsson2017quantized} demonstrated a marginal communication gap in achievable rates for  linear precoders with 3 to 4-bit DACs compared to infinite-resolution DACs and further proposed a non-linear precoder in 1-bit DAC systems via relaxation and sphere precoding.
The use of $2.5\times$ more transmit antennas can compensate for the spectral efficiency loss brought on by the use of 1-bit DACs according to the analyzed rate of the quantized downlink systems with matched-filter precoding \cite{li2017downlink}.
Based on constructive interference and decision regions, low-complexity symbol-level precoding methods for 1-bit DAC systems were developed for quadrature-amplitude-modulation constellations in \cite{park2021construction}.
For quantized downlink OFDM systems, the authors in \cite{jacobsson2019linear} derived a lower bound on achievable sum-rate using linear precoding and oversampling DACs.
The authors in \cite{zhang2019mixed} considered massive MIMO relaying systems with mixed-DACs and mixed-ADCs at the relay and derived exact and closed-form expressions for the achievable rate which approach infinite-resolution performance using only 2-3 bits thanks to strong synergy with large-scale antenna arrays.
By applying AQNM to massive MIMO systems with millimeter wave channels, the authors in \cite{ribeiro2018energy} introduced fully-connected and partially-connected hybrid BF architectures under low-resolution DACs which are more energy-efficient than conventional digital-only precoders.
%

As modern cellular communication systems are primarily limited by interference, research efforts and later standards supported CoMP to coordinate transmission by multiple base stations (BSs) in order to reduce inter-cell interference, thereby improving data rates and coverage
\cite{jungnickel2014role,irmer2011coordinated,bengtsson1999optimal,ng2008distributed,shin2017coordinated,rashid1998joint,rashid1998transmit,song2007network,dahrouj2010coordinated}.
 The authors in \cite{jungnickel2014role} demonstrated that the synergy between CoMP and massive MIMO is advantageous to 5G communication systems due to a more robust link, localized interference, and reduced backhaul overhead.
The feasibility of CoMP was demonstrated for both UL and DL in physical testbeds with improved average throughput and cell edge throughput \cite{irmer2011coordinated}.

To improve data rate and satisfy demanding requirements of cellular systems, many papers have contributed to beamforming design \cite{bengtsson1999optimal,ng2008distributed,shin2017coordinated}.
The DL BF problem was cast as a semidefinite programming problem and efficiently solved via interior point methods \cite{bengtsson1999optimal}.
Noting that signals from neighboring BSs have significant impact, the authors in \cite{ng2008distributed} proposed a near-optimal distributed DL BF algorithm based on message passing between neighboring BSs without requiring centralized processing.
To improve the data rate of cell-edge users, the CoMP BF problem based on 
interference alignment was also studied in a non-orthogonal multiple access system \cite{shin2017coordinated}.

To further unleash the potential of CoMP systems, joint optimization of beamforming and power control has been proposed \cite{rashid1998joint,rashid1998transmit,song2007network,wiesel2005linear}.
For UL transmission, authors in \cite{rashid1998joint} proposed a fixed-point algorithm that jointly solves BF and PC problems and proves the existence of at least one optimal solution.
The authors in \cite{rashid1998transmit} formulated a virtual UL system model whose combiner and power control solutions are used to iteratively update the DL BF solution. 
For multiuser MIMO systems, linear programming UL–DL duality led to centralized and decentralized algorithms to efficiently solve the joint BF and PC problem \cite{song2007network}.
Authors in \cite{wiesel2005linear} derived an iterative algorithm that finds optimal UL PC and DL BF solutions based on Lagrangian theory and brought insights to \cite{dahrouj2010coordinated} which generalized the algorithm to multicell multiuser MIMO systems. 





Recently, a coarsely quantized CoMP BF and PC problem  was studied for OFDM systems in \cite{choi2020quantized}.
The UL-DL BF duality was extended to the quantized OFDM systems, and an iterative algorithm for solving the DL total transmit power minimization problems was developed by leveraging the duality \cite{choi2020quantized} without requiring explicit estimation of inter-cell interference. 
It was shown that the proposed algorithm can be performed in a distributed fashion by estimating the covariance matrix of the received signals at local BSs. 
The problem considered in \cite{choi2020quantized} only focused on the total power minimization with the quality-of-service constraints. 
Provided that power amplifiers escape the efficient amplification regime when RF input exceeds a certain value and one power amplifier is usually allocated for each transmit antenna, minimizing total power cannot guarantee the efficiency of hardware components.
Therefore, it is necessary to minimize the peak power consumed across all antennas to simplify the design of the power amplifier and avoid distortion from the nonlinearity of the power amplifier in the high-power regime.
It's worth mentioning that formulating the problem to minimize peak power, as opposed to the sum power objective, leads to a more intricate derivation and algorithm development.
The reduction of peak power has the potential to enhance the practicality and help us scale up the multi-cell and multi-user communication network.
Employing cell-free MIMO systems can be considered as a scalable way to implement CoMP \cite{interdonato2019ubiquitous}; however, cell-free MIMO solutions are missing either per-antenna level power restriction or the joint optimization of DL beamformer and UL power control \cite{zhang2018mixed,bjornson2020scalable}.
Recently, \cite{mai2020downlink} worked on the DL spectral efficiency (SE) max-min problem under sum power constraint of access points whereas \cite{wang2022uplink} investigated UL SE performance of various implementations for an arbitrary precoding and power control; however, allowing amplifiers to remain linear while reducing per-antenna power can improve the practicality of such studies.
Therefore, a thorough study on the quantized DL CoMP BF and UL power control for minimizing the transmit power with per-antenna power constraints would make a worthwhile contribution toward a more practical CoMP deployment.

\subsection{Contributions}
In this paper, we consider downlink multicell massive MIMO OFDM communication systems.
The BS are equipped with low-resolution data converters, i.e., DACs and ADCs, and cooperate for BF and PC. 
In such a system, we investigate a DL antenna power minimization problem with quality-of-service constraints.
The  contributions are summarized as follows:

\begin{itemize}
\item {\bf Formulating the DL antenna power minimax problem and deriving its dual.}
We aim to minimize the maximum transmit power over all transmit antennas and formulate the problem with individual signal-to-quantization-plus-interference-and-noise ratio (SQINR) constraints.
As the main contribution of this paper, we derive the Lagrangian dual of the primal DL OFDM problem, which can be considered as a virtual UL OFDM transmit power minimization problem with uncertain noise covariance matrices.
This finding extends the previous DL-UL duality under per-antenna power constraints \cite{yu2007transmitter} to the quantized OFDM systems.
By transforming the DL OFDM problem to a strictly feasible second-order cone program (SOCP), we show that  strong duality holds between the primal DL OFDM problem and its associated dual, i.e., the virtual UL OFDM problem in our work.
    
\item {\bf Developing an iterative minimax algorithm to provide a feasible solution}. 
Leveraging the strong duality, we develop an iterative algorithm to solve the primal DL OFDM BF problem;
we first solve the dual UL OFDM problem for a fixed set of UL noise covariance matrices.
 We then compute the DL beamformer via linear transformation of the obtained UL solution.
Using the DL beamformer, we update the UL noise covariance matrices via projected subgradient ascent method.
Finally, we repeat the steps until the UL noise covariance matrices converge.
Although the DL BF problem can be cast to a semidefinite programming, the proposed direct iterative updates are in general more efficient and insightful.

\item 
{\bf Validating the proposed algorithm through extensive simulations}.
Simulation results validate the derived results and algorithm in both wideband and narrowband scenarios.
The proposed algorithm outperforms approaches that do not impose per-antenna constraints in terms of the maximum antenna transmit power consumption.
We further show the significant advantages of the proposed algorithm with per-antenna constraints by comparing peak-to-average-power ratio (PAPR) which is a paramount measure for the design of the power amplifiers and other nonlinear electronics in OFDM communication systems.

\end{itemize}

The rest of this paper is organized as follows. 
In Section~\ref{sec:sys_model}, we present the wideband OFDM system under coarse quantization at the BSs and establish the maximum power minimization problem with target SQINR and per-antenna power constraints. 
Section \ref{sec:duality} proves duality with zero duality gap between the primal DL and dual UL problems.
Inspired by the strong duality, we derive the optimal DL solution for a given dual solution and noise covariance matrix in Section \ref{subsec:DL_sol} and propose an iterative algorithm to find the final solution in Section \ref{subsec:algorithm}.
In Section \ref{sec:simulation}, the proposed method is evaluated for various configurations along with other benchmarks with respect to the maximum transmit power and PAPR.
Section~\ref{sec:conclusion} concludes the paper.

{\it Notation}: $\bf{A}$ and $\bf{a}$ denote a matrix and a column vector, respectively. 
$\mathbf{A}^{H}$ and $\mathbf{A}^T$  denote conjugate transpose and transpose, respectively. 
$[{\bf A}]_{i,:}$ and $ \mathbf{a}_i$ indicate the $i$th row and column vectors of $\bf A$. 
We denote $a_{i,j}$ as the $\{i,j\}$th element of $\bf A$ and $a_{i}$ as the $i$th element of $\bf a$. 
$\mathcal{CN}(\mu, \sigma^2)$ is a complex Gaussian distribution with mean $\mu$ and variance $\sigma^2$. 
The diagonal matrix $\rm diag(\bf A)$ has $\{a_{i,i}\}$ as its diagonal entries, and $\rm diag (\bf a)$ or $\rm diag({\bf a}^T)$ creates a diagonal matrix with $\{a_i\}$ as its diagonal entries. 
A block diagonal matrix is denoted as ${\rm blkdiag}({\bf A}_1, \dots,{\bf A}_{N})$. 
${\bf I}_N$ is a $N\!\times\!N$ identity matrix.
${\bf 1}_N$ and ${\bf 0}_N$ are $N \!\times\! 1$ vector of ones and zeros, respectively.
We represent the vectorization operation of a matrix $\bA$ by $\rm vec({\bf A})$.
$\otimes$ denotes Kronecker product operator.
$\|\bf A\|$ represents $L_2$ norm.
$\preceq$ denotes matrix inequality. 
${\rm max}(a, b)$ represents an element-wise max function. 
$\bbE[\cdot]$ represents the expectation operator.

\section{System Model}
\label{sec:sys_model}
\begin{table}[!t]
\caption{A list of Key Expressions}
\begin{center}
\begin{tabular}{cc}
    \cline{1-2}
    \hlineB{2}
    {\bf Notation} & {\bf Description}\\
	\hlineB{2}
    $N_b$ & Number of BS antennas  \\
	\hline
 $N_c$& Number of cells
    \\
     \hline
     $N_u$& Number of users  \\
     \hline
     $K$ & Number of subcarriers  \\
    \hline 
    $\bx_i$& OFDM symbols at BS$_i$\\
    \hline
    $\alpha$ & Quantization gain\\
    \hline
    $\bq_i$& Quantization noise vector at BS$_i$ \\
    \hline
    ${\rm Q}_{i,u}(k)$ & Downlink quantization error \\
    \hline
    $\bg_{i,j,u}(k)$& Uplink channel from $u$th user in cell $j$ to BS$_i$ \\
    \hline
    $\bg_{j,i,u}^H(k)$& Downlink channel from BS$_j$ to $u$th user in cell $i$ \\
    \hline
    $\lambda_{i,u}(k)$ & Transmit power \\
    \hline
    $\bw_{i,u}(k)$ & Beamforming vector \\
    \hline
    $\bD_i$& Noise covariance matrix at BS$_i$  \\
    \hline
    $\bff_{i,u}(k)$ & Arbitrary uplink combiner \\
    \hline
    $\bff_{i,u}^{\sf MMSE}(k)$ & Uplink MMSE combiner \\
    \hline
    $\bZ_{i,u}(k)$ & Covariance of interference-plus-quantization-plus-noise term\\
 \hlineB{2}
\end{tabular}
\end{center}
\end{table}

\begin{figure}[!t]\centering
	\includegraphics[width=1\columnwidth]{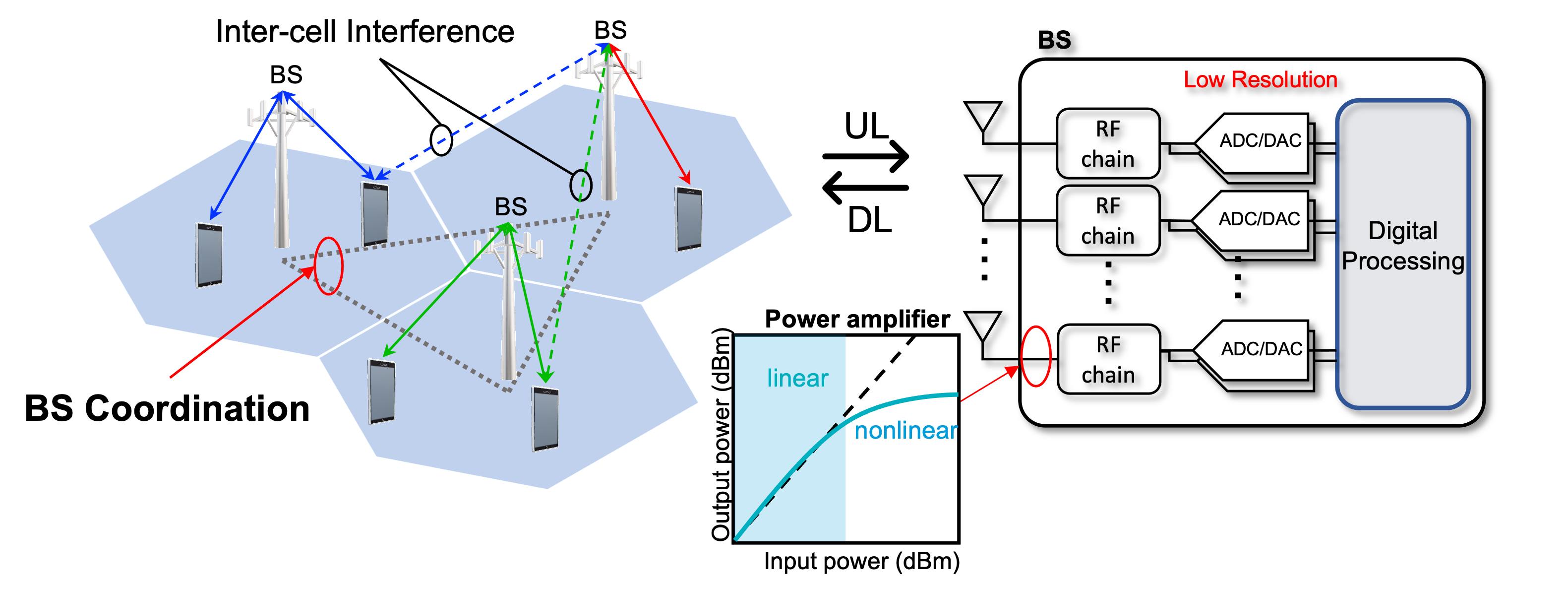}
	\caption{Multicell and multiuser MIMO system when each base station (BS) employs low-resolution ADCs and DACs. 
    Nonlinear distortions occur after a certain saturation point. } 
	\label{fig:system}
\end{figure}

\subsection{Network and Downlink Signal Model}
We consider a wideband multicell and multiuser MIMO network with $N_c$ cells, $N_u$ single-antenna users per cell, and $K$ subcarriers.
All BSs are equipped with $N_b$ antennas.
The BS in cell $i$ is denoted as BS$_i$ which precodes signals to serve $N_u$ users in cell $i$.
We assume the BSs for all $N_c$ cells cooperate and are equipped with low-resolution DACs with $b$ DAC quantization bits as shown in Fig.~\ref{fig:system}.
Time-division duplexing (TDD) is assumed to exploit channel reciprocity.
The BSs are considered to have perfect channel state information (CSI) for all channels as they cooperate.
In other words, the serving BS is assumed to have complete access to the wireless channels of interfering BSs to jointly tailor precoding vectors across all BSs.


We consider a  DL broadcast system where BS$_i$ generates $\bs_i(k)\!\sim\!\mathcal{CN}({\bf 0}_{N_u},{\bf I}_{N_u}) \in \bbC^{N_u}$ that contains the dedicated symbols for $N_u$ users in cell $i\in\{1,\ldots,N_c\}$ at subcarrier $k\in\{0,\ldots,K-1\}$.
We use $\bW_i(k)=\left[\bw_{i,1}(k),\ldots,\bw_{i,N_u}(k)\right]  \in \bbC^{N_b \times N_u} $ to denote the matrix of precoders, and the $u$th symbol of $\bs_i(k)$ is precoded by $\bw_{i,u}(k)$. 
We define the precoded frequency domain symbol for user $u$ in cell $i$ at subcarrier $k$ as $\bu_i(k) = \bW_i(k)\bs_i(k)$.
Let $\bx_i(k) \in \bbC^{N_b}$ be the vector of OFDM symbols of $N_u$ users in cell $i$ at time $k$.  
We stack the OFDM symbol vectors at BS$_i$ over $K$ time slots as $\bx_i = [\bx^T_i(0),\dots,\bx^T_i(K-1)]^T $, which is given as
\begin{align}
    \bx_i &= (\bW_{\rm DFT}^H\otimes \bI_{ N_b})\bu_i
    = {\pmb \Psi}_{N_b}^H\bW_i\bs_i\in \bbC^{KN_b},
\end{align} 
where $\bW_{\rm DFT} \in \bbC^{K\times K}$ represents a normalized discrete Fourier transform (DFT) matrix, $\bu_i=[\bu_i^T(0),\dots, \bu_i^T(K-1)]^T \in \bbC^{KN_b}$, ${\pmb \Psi}_{N_b} = \bW_{\rm DFT}\otimes \bI_{ N_b}$, $\bW_i \! =\! {\rm blkdiag}\big(\bW_i(0),\dots,\bW_i(K-1)\big) \!\in\! \bbC^{KN_b\times KN_u}$, and $\bs_i = [\bs_i^T(0),\dots,\bs_i^T(K-1)]^T \in \bbC^{KN_u}$.

Before transmitting the signals, $\bx_i$ is quantized by the low-resolution DACs with $b$ quantization bits.
We adopt the AQNM~\cite{choi2020quantized, orhan2015low} to elicit a linearized approximation of the quantization process derived from assuming a scalar minimum-mean-squared-error (MMSE) quantizer with Gaussian signaling.
Since the general trend of the capacity curve with low-resolution quantizers is well captured with the AQNM \cite{orhan2015low}, it is meaningful to use the AQNM for tractable analysis of quantized systems.
The quantized signal vector of input $\bx_i$ with the AQNM is represented as
\begin{align}
    Q(\bx_i) \approx  \bx_{{\rm q},i} &= \alpha \bx_i + \bq_i \\
    &= [\bx^T_{{\rm q},i}(0),\dots,\bx^T_{{\rm q},i}(K-1)]^T \in \bbC^{KN_b}
\end{align}
where $Q(\cdot)$ is a quantizer applied for real and imaginary parts,  $\bq_i =[\bq_i^T(0),\dots,\bq_i^T(K-1)]^T$ is the stacked quantization noise vector at BS$_i$ with $\bq_i(k)$ denoting the quantization noise vector for the quantization input of $\bx_i(k)$, $\bx_{{\rm q},i}(k)=\alpha\bx_i(k)+\bq_i(k)$ is the quantized signal transmitted by BS$_i$ at the $k$th time slot, and $\alpha$ is the quantization gain defined as $\alpha \!=\! 1 \!-\! \beta$ where the $\beta$ values are quantified in Table 1 in \cite{fan2015uplink}. 
We assume that the quantization noise is uncorrelated with $\bx_i$ and follows a complex Gaussian distribution, i.e., $\bq_i  \!\sim\! \cC\cN({\bf 0}, \bC_{{\bf q}_i})$ which is the worst case in terms of the achievable rate.
The covariance matrix $\bC_{{\bf q}_i}$ is computed as  \cite{choi2020quantized}
\begin{align}
    \label{eq:Cqq_dl_ofdm}
    \bC_{{\bf q}_i} = \alpha(1-\alpha){\rm diag}\big({\pmb \Psi}_{N_b}^H\bW_i\bW_i^H{\pmb \Psi}_{N_b}\big).
\end{align}

By the broadcast channel, each user collects signals from all BSs. 
Stacking over $K$ subcarriers after removing cyclic prefix (CP) and applying the DFT, the users in cell $i$ receive the signal of 
\begin{align}
    \label{eq:y_dl_ofdm}
    \by_i 
    =\alpha\bG_{i,i}^H\bW_i\bs_i + \alpha\sum_{j\neq i}^{N_c}\bG_{j,i}^H\bW_j\bs_j + \tilde{\bq}_j + \tilde{\bn}_i \in \bbC^{KN_u}.
\end{align}
The DL frequency domain channels combined accross all $K$ subcarriers are defined as $\bG^H_{j,i}= {\rm blkdiag}(\bG^H_{j,i}(0),\cdots,\bG^H_{j,i}(K-1)) \in \bbC^{KN_u \times KN_b}$, where $\bG_{j,i}(k) = \sum_{\ell = 0}^{L-1}\bH_{j,i,\ell} \,e^{-\frac{j2\pi k\ell}{K}}$ is the individual UL frequency domain  channel for the $k$th subcarrier  between BS$_j$ and users in cell $i$ (by channel reciprocity, which is an inherit feature of TDD systems, $\bG_{j,i}(k)$ is interpreted as the UL frequency domain channel matrix), $\bH_{j,i,\ell}$ is the UL time domain channel between BS$_j$ and users in cell $i$ for the $\ell$th channel tap, and $L$ is the delay spread. 
In addition, $\tilde{\bq}_j=[\tilde \bq_j^T(0),\dots, \tilde \bq_j^T(K-1)]^T  = \sum_{j=1}^{N_c}\bG_{j,i}^H{\pmb \Psi}_{N_b}\bq_j $ and $\tilde{\bn}_i  =[\tilde \bn_i(0), \dots, \tilde \bn_i(K-1)]^T= {\pmb \Psi}_{N_u}\bn_i$  where ${\pmb \Psi}_{N_u} = \bW_{\rm DFT}\otimes \bI_{N_u}$ and  $\bn_i = [\bn^T_i(0),\dots,\bn_i^T(K-1)]^T \sim \cC\cN({\bf 0}_{KN_u}, \sigma\bI_{KN_u}) $ denotes the stacked additive white Gaussian noise (AWGN) vector over $K$ time slots for $N_u$ users in cell $i$.
The received signal at user $u$ in cell $i$ for subcarrier $k$ is  expressed as
\begin{align}
   \label{eq:yiu_dl_ofdm}
    y_{i,u}(k) &=  \alpha\bg_{i,i,u}^H(k)\bw_{i,u}(k)s_{i,u}(k)
    \\  \nonumber
    &+ \!\alpha \!\!\! \sum_{(j,v) \neq (i,u)}^{N_c, N_u} \!\!\!\!\!\bg_{j,i,u}^H(k)\bw_{j,v}(k)s_{j,v}(k) + \tilde{q}_{i,u}(k) + \tilde{n}_{i,u}(k),
\end{align}
where  $\bg_{j,i,u}(k)$ indicates the $u$th column of $\bG_{j,i}$ and $s_{i,u}(k)$, $\tilde q_{i,u}(k)$, and $\tilde n_{i,u}(k)$ represent the $u$th element of $\bs_i(k)$, $\tilde \bq_i(k)$, and $\tilde \bn_i(k)$, respectively.

\subsection{Problem Formulation}
Based on \eqref{eq:Cqq_dl_ofdm}-\eqref{eq:yiu_dl_ofdm}, 
the DL SQINR for user $u$ in cell $i$ at subcarrier $k$ is represented as \footnote{Here, we assume perfect CSIT. To consider a channel estimation error for imperfect CSIT case, we may need to exploit channel error covariance matrices to reformulate the SQINR in terms of the covariance. Then similar analysis in our work can be applied by incorporating the impact of imperfect CSIT. We shall leave it for a future work.}
\begin{align}
    \label{eq:SQINR_dl_ofdm}
    \Gamma_{i,u}(k)=
    \frac{\alpha^2|\bg_{i,i,u}^H(k)\bw_{i,u}(k)|^2}{\alpha^2 \sum_{(j,v) \neq (i,u)}^{N_c, N_u}\!|\bg_{j,i,u}^H(k)\bw_{j,v}(k)|^2+ {\rm Q}_{i,u}(k) +\! \sigma^2},
\end{align}
whose first term in the denominator is interference and ${\rm Q}_{i,u}(k)$ is quantization error defined as
\begin{equation}
    {\rm Q}_{i,u}(k) = \sum\nolimits_{j=1}^{N_c}\ubg_{j,i,u}^H\!(k){\pmb \Psi}_{N_b}\bC_{{\bf q}_i}{\pmb \Psi}_{N_b}^H\ubg_{j,i,u}\!(k), \label{eq:quantizationnoise}
\end{equation}
where $\ubg_{j,i,u}\!(k)$ is the channel  for subcarrier $k$ of user $u$ extracted as the ($kN_u + u$)th column of $\bG_{j,i}$.
Since BS$_i$ sends $\bx_{{\rm q},i}(k)$ at the $k$th time slot, the average transmit power of the $m$th antenna of BS$_i$ at the $k$th time slot is written as $\Big[\bbE[\bx_{{\rm q},i}(k)\bx_{{\rm q},i}^H(k)]\Big]_{m,m}$.
Using \eqref{eq:SQINR_dl_ofdm}, the DL OFDM transmit power minimization problem with per-antenna power and SQINR constraints is formulated as 
\begin{align}
    \label{eq:dl_problem}
   \mathop{{\text{minimize}}}_{{\bf w}_{i,u}(k),\ p_{\rm 0}}& \;\;  p_{\rm 0}\\
   \label{eq:dl_SQINR}
   {\text{subject to}} & \;\; \Gamma_{i,u}(k) \geq \gamma_{i,u}(k) \quad \forall i,u,k  \\
   \label{eq:per_antenna_const}
   &\  \Big[\bbE[\bx_{{\rm q},i}(k)\bx_{{\rm q},i}^H(k)]\Big]_{m,m}\leq p_{\rm 0} \quad \forall i,k,m,
\end{align}
where $p_0$ denotes the peak transmit power across $N_bN_c$ BS antennas. 
Making $p_0$ smaller results in a more precise and economical usage of the power amplifiers by limiting the dynamic range.

We aim to identify the DL beamformer $\bw_{i,u}(k)$ for all $i,u,k$ that can minimize $p_0$ while satisfying all SQINR constraints.
As shown in \eqref{eq:SQINR_dl_ofdm}-\eqref{eq:quantizationnoise}, the fact that $\Gamma_{i,u}(k)$ contains the quantization noise as well as $\bw_{i,u}(k)$ for all $i,u,k$ makes the problem more challenging to solve.
Since a closed-form algebraic solution is not available, we seek an efficient algorithm to find a numerical solution.
Instead of solving the optimization problem directly, we first derive a Lagrangian dual and then solve the problem in the dual domain with an efficient solver.
We will show that the DL BF solution is ultimately calculated by exploiting its associated dual solution.

\section{Duality between Downlink and Uplink}
\label{sec:duality}

In this section, we first introduce the corresponding uplink problem and then show that the uplink and downlink problems have strong duality.  
The primary challenge in deriving a dual problem is the quantization noise terms coupled with both beamformers and OFDM modulation. 

\subsection{Dual UL OFDM Systems with Low-Resolution ADCs}
\label{sec:uplinkformulation}

Let $\bs^{\rm ul}_i(k) \in \bbC^{N_u}$ denote the vector of symbols from $N_u$ users in cell $i$ at subcarrier $k$ and let 
$\bu^{\rm ul}_i(k) = {\pmb \Lambda}_i(k)^{1/2}\bs^{\rm ul}_i(k)$
where ${\pmb \Lambda}_i(k) = {\rm diag}\big(\lambda_{i,1}(k),\dots,\lambda_{i,N_u}(k)\big)$ is the collection of non-negative transmit power.
With $\bx_i^{\rm ul}(k)$ denoting the vector of OFDM symbols of $N_u$ users in cell $i$ at time $k$,  
we stack the OFDM symbol vectors as 
\begin{align}
    {\bx}_i^{\rm ul} &= [\bx^{\rm ul}_i(0)^T,\dots,\bx^{\rm ul}_i(K-1)^T]^T\\
    &= (\bW_{\rm DFT}^H \otimes \bI_{N_u}){\bu}^{\rm ul}_i = {\pmb \Psi}_{N_u}^H{\pmb \Lambda}^{1/2}_i{\bs}^{\rm ul}_i,
\end{align}
where ${\pmb \Psi}_{N_u} \!=\! (\bW_{\rm DFT} \otimes \bI_{N_u})$, ${\bu}_i^{\rm ul} = [\bu^{\rm ul}_i(0)^T,\dots,\bu^{\rm ul}_i(K-1)^T]^T$, ${\bs}_i^{\rm ul} = [\bs^{\rm ul}_i(0)^T,\dots,\bs^{\rm ul}_i(K-1)^T]^T$, and ${\pmb \Lambda}_i\! =\! {\rm blkdiag}\big({\pmb \Lambda}_i(0),\dots,{\pmb \Lambda}_i(K\!-\!1)\big)$.

We assume that the noise vector received at BS$_i$, i.e., $\tilde{\bn}^{\rm ul}_i(k)$, is unknown yet tunable by diagonal noise covariance $\bD_i$, i.e. $\tilde{\bn}^{\rm ul}_i(k)\sim\cC\cN({{\bf 0}_{N_b}}, \bD_i)$. 
After performing CP removal and applying DFT operation, the frequency-domain received signal at subcarrier $k$ is then given as
\begin{align}
    \label{eq:yk_ul_ofdm} 
    \by^{\rm ul}_i(k) = &\alpha\bG_{i,i}(k){\pmb \Lambda}_i(k)\bs^{\rm ul}_i(k) \nonumber\\
    & \!\!\! + \alpha\sum_{j\neq i}^{N_c}\bG_{i,j}(k){\pmb \Lambda}_j(k)\bs^{\rm ul}_j(k) 
+\tilde{\bq}^{\rm ul}_i(k) + \alpha \tilde{\bn}^{\rm ul}_i(k)  .
\end{align}
The combined signal for user $u$ at subcarrier $k$ is now given as $\bff_{i,u}^H(k)\by^{\rm ul}_i(k)$ where $\bff_{i,u}(k)$ is an arbitrary equalizer of subcarrier $k$ for user $u$ in cell $i$.

Accordingly, when considering a virtual uplink transmit power minimization problem for \eqref{eq:yk_ul_ofdm}, we can define the uplink SQINR constraint  as 
\begin{equation}
    \label{eq:ul_const}
    \max_{\bff_{i,u}(k)}{\Gamma}^{\rm ul}_{i,u}(k) \geq \gamma_{i,u}(k) \;\;\forall i,u,k,
\end{equation}
where the UL SQINR for user $u$ in cell $i$ at subcarrier $k$ is computed as \eqref{eq:sinr_ul_ofdm} in the next page,
\begin{figure*}
\begin{align}
    \label{eq:sinr_ul_ofdm}
    {\Gamma}^{\rm ul}_{i,u}(k) \!=\! \frac{ \alpha^2\lambda_{i,u}(k)|\bff_{i,u}^H(k)\bg_{i,i,u}(k)|^2 }{ \alpha^2 \!\sum_{(j,v)\neq (i,u)}^{N_c,N_u}\!\lambda_{j,v}(k)|\bff_{i,u}^H(k)\bg_{i,j,v}(k)|^2  \!+\! \alpha^2 \bff_{i,u}^H(k) \bD_i \bff_{i,u}(k) \!+\! \bff_{i,u}^H(k)\bC_{\!\tilde{\bq}^{\rm ul}_i(k)}\bff_{i,u}(k) }.
\end{align}
\rule{\textwidth}{0.7pt}
\end{figure*}
${\pmb \Psi}_{N_b}(k) \!=\! \left([\bW_{{\rm DFT}}]_{k+1,:}\otimes \bI_{N_b}\right)$, $\bG_i \!=\! [\bG_{i,1},\dots, \bG_{i,N_c}]$, 
${\pmb \Lambda} \!=\! {\rm blkdiag}\big({\pmb \Lambda}_{1},\dots,{\pmb \Lambda}_{N_c}\big)$, and $\bC_{\tilde{\bq}^{\rm ul}_i(k)}$ is  the uplink quantization error covariance matrix expressed by the AQNM as
\begin{align}
    &\bC_{\tilde{\bq}^{\rm ul}_i(k)} = \\
    &\alpha(1-\alpha) {\pmb \Psi_{N_b}}(k){\rm diag}\big({\pmb \Psi^H_{N_b}}\bG_i{\pmb \Lambda} \bG_i^H{\pmb \Psi_{N_b}}\!+\!\bD_i\big){\pmb \Psi^H_{N_b}}(k). \nonumber
\end{align}
The UL SQINR can be rewritten as
\begin{equation}
    {\Gamma}^{\rm ul}_{i,u}(k)\! =\! \frac{ \alpha^2\lambda_{i,u}(k)|\bff_{i,u}^H(k)\bg_{i,i,u}(k)|^2 }{ \bff_{i,u}^H(k)\bZ_{i,u}(k)\bff_{i,u}(k)}
\end{equation}
    where
    \begin{align} 
    	&\bZ_{i,u}(k) = \alpha^2 \!\!\!\!\! \sum_{(j,v)\neq (i,u)} 
    	\!\!\!\!\!\lambda_{j,v}(k) {\bf g}_{i,j,v}(k){\bf g}_{i,j,v}^H(k) \\
     &+ \alpha {\bD_i} 
     \!+\!\alpha(1-\alpha) {\pmb \Psi_{N_b}}(k){\rm diag}\big({\pmb \Psi^H_{N_b}}\bG_i{\pmb \Lambda} \bG_i^H{\pmb \Psi_{N_b}}\big){\pmb \Psi^H_{N_b}}(k) \nonumber
    \end{align}
    is the auto-covariance matrix corresponding to the interference-plus-quantization-plus-noise term in \eqref{eq:yk_ul_ofdm}.

    We define the MMSE equalizer as
    \begin{align}
    	\label{eq:mmse}
    	\bff_{i,u}^{\sf MMSE}(k) =  &\bZ_{i,u}^{-1}(k){\bf g}_{i,i,u}(k).
    \end{align}
    
    We note that \eqref{eq:mmse} maximizes the uplink SQINR in the constraint \eqref{eq:ul_const}.
\subsection{Downlink-Uplink Duality}
In Theorem~\ref{thm:duality_ofdm}, we derive a dual problem of  \eqref{eq:dl_problem} which is equivalent to the virtual UL OFDM problem with unknown noise covariance matrices.
\begin{theorem}[Duality]
    \label{thm:duality_ofdm}
    The Lagrangian dual problem of the DL problem in \eqref{eq:dl_problem} is equivalent to 
    \begin{align}
        \label{eq:dual_problem}
        \max_{{\bf D}_i }\min_{\lambda_{i,u}(k)\geq0}& \;\;  \sum_{i,u,k}^{N_c,N_u,K}\lambda_{i,u}(k)\sigma^2\\
        \label{eq:dual_SQINR_const}
        {\text{ \rm subject to}} & \;\; {\Gamma}^{\rm ul}_{i,u}(k) \geq \gamma_{i,u}(k),\\
        &\;\; \bff_{i,u}(k) = \bff_{i,u}^{\sf MMSE}(k), \\
        \label{eq:D_const1}
        &\;\;  \bD_i \succeq 0, \ \bD_i\in \bbR^{N_b\times N_b}: \text{\rm diagonal}, \\
        \label{eq:D_const2}
        &\;\;  {\rm tr}(\bD_i) \leq N_b \quad \forall i,u,k.
   \end{align}
We can consider the  problem above as a virtual UL transmit power minimization problem for the virtual UL system introduced in Section~\ref{sec:uplinkformulation} where BS$_i$ operates with unknown noise covariance matrix ${\bf D}_i$.  
    
\begin{proof}
    See Appendix~\ref{appx:duality_ofdm}.
\end{proof}
\end{theorem}
We remark that the Lagrangian dual problem in \eqref{eq:dual_problem} is considered to be an antenna power minimax problem with noise covariance constraints for the virtual UL ODFM system with low-resolution ADCs at the BSs shown in Theorem~\ref{thm:duality_ofdm}.

\begin{corollary}[Strong Duality]
\label{cor:strongduality}
 There exists zero duality gap between the DL problem formulation and its associated dual problem.
\begin{proof}
    See Appendix~\ref{appx:strongduality}.
\end{proof}
\end{corollary}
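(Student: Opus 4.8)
The plan is to exhibit a convex reformulation of the primal DL problem \eqref{eq:dl_problem} for which Slater's condition holds, then invoke the standard strong-duality theorem for convex programs and identify the resulting dual with the virtual UL minimax problem of Theorem~\ref{thm:duality_ofdm}. The skeleton is: (i) rewrite \eqref{eq:dl_problem} as a second-order cone program (SOCP); (ii) construct a strictly feasible point; (iii) conclude zero duality gap and match the duals.

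For step (i), I would use the usual phase-rotation argument. For each $i,u,k$, multiplying $\bw_{i,u}(k)$ by an arbitrary unit-modulus scalar changes neither the objective, nor the per-antenna constraints \eqref{eq:per_antenna_const}, nor the magnitudes entering ${\rm I}_{i,u}(k)$ and ${\rm Q}_{i,u}(k)$; hence we may assume without loss of optimality that $\bg_{i,i,u}^H(k)\bw_{i,u}(k)\in\bbR_{\ge 0}$. Under this normalization, $\Gamma_{i,u}(k)\ge\gamma_{i,u,k}$ is equivalent to
\[
\big\| \mathbf{t}_{i,u}(k) \big\| \le \frac{\alpha}{\sqrt{\gamma_{i,u,k}}}\,\mathrm{Re}\big\{\bg_{i,i,u}^H(k)\bw_{i,u}(k)\big\},\qquad \mathrm{Im}\big\{\bg_{i,i,u}^H(k)\bw_{i,u}(k)\big\}=0,
\]
where $\mathbf{t}_{i,u}(k)$ stacks the terms $\alpha\bg_{j,i,u}^H(k)\bw_{j,v}(k)$ for $(j,v)\neq(i,u)$, the constant $\sigma$, and a vector whose squared norm equals ${\rm Q}_{i,u}(k)$. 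The last piece is available because, by \eqref{eq:Cqq_dl_ofdm}, $\bC_{\underline\bq_j\underline\bq_j}$ is $\alpha\beta$ times a diagonal matrix whose entries are sums of squared magnitudes of linear functionals of $\ubW_j$, so ${\rm Q}_{i,u}(k)$ is a positive-semidefinite quadratic form in the stacked beamformers, i.e.\ a squared Euclidean norm of an affine map of them. By the same reasoning, $\big[\bbE[\bx_{{\rm q},i}(k)\bx_{{\rm q},i}^H(k)]\big]_{m,m}$ is such a quadratic form, so \eqref{eq:per_antenna_const} becomes a rotated second-order cone constraint. The objective being linear, \eqref{eq:dl_problem} is an SOCP and hence convex.

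For step (ii), pick any beamformers $\{\bw_{i,u}^{0}(k)\}$ yielding strictly positive SQINR at every $(i,u,k)$ (e.g.\ a regularized zero-forcing design). Scaling all of them by a common factor $\tau>0$ multiplies each numerator $\alpha^2|\bg_{i,i,u}^H(k)\bw_{i,u}(k)|^2$, each interference term ${\rm I}_{i,u}(k)$, and each quantization term ${\rm Q}_{i,u}(k)$ by $\tau^2$ (the latter because $\bC_{\underline\bq\underline\bq}$ is quadratic in the beamformers), while $\sigma^2$ is fixed; hence $\Gamma_{i,u}(k)\to\infty$ as $\tau\to\infty$, so for $\tau$ large enough all SQINR constraints hold strictly. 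Choosing $p_0$ strictly larger than $\max_{i,k,m}\big[\bbE[\bx_{{\rm q},i}(k)\bx_{{\rm q},i}^H(k)]\big]_{m,m}$ evaluated at the scaled design produces a strictly feasible point. A strictly feasible SOCP has zero duality gap with its Lagrangian dual, and by Theorem~\ref{thm:duality_ofdm} that Lagrangian dual is exactly the virtual UL OFDM minimax problem \eqref{eq:dual_problem}; this establishes strong duality between the primal DL problem and its associated dual.

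The main obstacle is the bookkeeping in step (i): writing ${\rm Q}_{i,u}(k)$ and the per-antenna power $\big[\bbE[\bx_{{\rm q},i}(k)\bx_{{\rm q},i}^H(k)]\big]_{m,m}$ explicitly as squared norms of affine functions of the stacked beamformers — the ${\rm diag}(\cdot)$ operator composed with the DFT blocks ${\pmb\Psi}_{N_b}$ couples subcarriers and antennas, so one must track these linear maps carefully to confirm the constraints are genuine SOC constraints — together with verifying that the phase-rotation normalization alters neither the optimal value nor the Lagrangian dual, so that the dual obtained here coincides with the one derived in Theorem~\ref{thm:duality_ofdm}.
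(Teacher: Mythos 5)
Your steps (i) and (iii) follow essentially the paper's own route: Appendix~\ref{appx:strongduality} likewise recasts \eqref{eq:dl_problem} as an SOCP by stacking the interference terms, the quantization contributions (via the diagonal structure of \eqref{eq:Cqq_dl_ofdm}), and $\sigma$ into one Euclidean norm bounded by $\alpha\sqrt{1+1/\gamma_{i,u,k}}\,|\bg_{i,i,u}^H(k)\bw_{i,u}(k)|$, writes the per-antenna power as a squared norm of the stacked precoders, and then concludes zero duality gap with the dual of Theorem~\ref{thm:duality_ofdm} from convexity plus strict feasibility.

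The genuine gap is in your step (ii). Scaling every beamformer by $\tau$ multiplies the numerator $\alpha^2|\bg_{i,i,u}^H(k)\bw_{i,u}(k)|^2$ by $\tau^2$, but it multiplies ${\rm I}_{i,u}(k)$ and ${\rm Q}_{i,u}(k)$ by $\tau^2$ as well --- the latter precisely because $\bC_{\underline{\bf q}_i\underline{\bf q}_i}$ in \eqref{eq:Cqq_dl_ofdm} is quadratic in $\ubW_i$, a fact you yourself invoke in step (i). Hence at the scaled design $\Gamma_{i,u}(k)=\tau^2 N^0/\big(\tau^2(I^0+Q^0)+\sigma^2\big)$, which increases monotonically to the finite ceiling $N^0/(I^0+Q^0)$ as $\tau\to\infty$ and does \emph{not} diverge. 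So the claim ``for $\tau$ large enough all SQINR constraints hold strictly'' fails unless the initial design already strictly exceeds every target in the interference-plus-quantization-limited regime, which is tantamount to assuming strict feasibility rather than proving it; having ``strictly positive SQINR at every $(i,u,k)$'' is far weaker and is satisfied by any nonzero beamformer. Indeed, because both inter-cell interference and quantization noise scale with transmit power, \eqref{eq:dl_problem} is infeasible whenever the targets $\gamma_{i,u,k}$ exceed a channel-dependent ceiling, and no power scaling can repair that. The paper does not construct a Slater point either --- it simply asserts strict feasibility, implicitly restricting attention to achievable target SQINRs --- so your step (ii) should be stated as an assumption on the targets (or on the existence of a design strictly meeting them), after which the remainder of your argument coincides with the paper's proof.
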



\section{Proposed Solution for Joint Beamforming}

In this section, we present the algorithm that can efficiently find the DL BF solution by leveraging the strong dual problem derived in Section \ref{sec:duality}.
In particular, we separate the dual problem in Theorem~\ref{thm:duality_ofdm} into the inner minimization and outer maximization problems and solve the problems in an alternating manner;
upon obtaining the dual solution, we use the dual solution to identify the primal solution and repeat through proper update and projection onto feasible sets.

\subsection{Optimal Downlink Precoder}
\label{subsec:DL_sol}

We first present the linear relationship between the optimal DL precoder and the dual solution, i.e., the virtual UL MMSE combiner in Corollary~\ref{cor:dl_precoder_ofdm}.
\begin{corollary}[Optimal DL Beamformer]
	\label{cor:dl_precoder_ofdm}
    With coefficients $\tau_{i,u}(k)$'s, an optimal DL beamformer can be obtained by establishing a linear transformation of the UL MMSE receiver, i.e.,  $\bw_{i,u}(k) = \sqrt{\tau_{i,u}(k)}\bff_{i,u}^{\sf MMSE}(k) \;\forall i, u, k$.
    Here, $\tau_{i,u}(k)$'s are derived from solving ${\btau} = {\bSigma}^{-1}{\bf 1}_{N_uN_cK}$, where   ${\btau}=[{\btau}^T(0), \cdots, {\btau}^T(K-1)]^T$ with ${\btau}(k) = [{\btau}_{1}^T(k), \cdots, {\btau}_{N_c}^T(k) ]^T$ and  ${\btau}_{i}^T(k) = [\tau_{i,1}(k), \cdots, \tau_{i,N_u}(k)]^T$, and ${\bSigma} = {\rm blkdiag}\big({\bSigma}(0),\dots,{\bSigma}(K-1)\big)$ whose submatrix is defined as
	\begin{equation}
		\label{eq:wb_constraint_matrix)}
  		{\bSigma}(k) = 
		\begin{pmatrix}
			{\bSigma}_{1,1}(k)  & \cdots & {\bSigma}_{1,N_c}(k) \\
			\vdots  & \ddots  & \vdots  \\
			{\bSigma}_{N_c,1}(k) & \cdots & {\bSigma}_{N_c,N_c}(k)
		\end{pmatrix},
	\end{equation}
	and
	\begin{align} 
		&[{\bSigma}_{i,j}(k)]_{u,v} = \\
            &\begin{cases}
			\frac{\alpha^2}{\gamma_{i,u}(k)}|\bg_{i,i,u}^H(k)\bff_{i,u}^{\sf MMSE}(k)|^2 \! - \alpha(1-\alpha)\!\sum_{\ell}\bff_{i,u}^{\sf MMSE}(\ell)^H{\pmb \Psi}_{N_b}(\ell)
			\\
			\quad \times {\rm diag}\left({\pmb \Psi}_{N_b}^H\bg_{i,i,u}(k)\bg_{i,i,u}^H(k){\pmb \Psi}_{N_b}\right){\pmb \Psi}_{N_b}^H(\ell)\bff_{i,u}^{\sf MMSE}(\ell), 
			\\
                \qquad
			\text{if } i=j \text{, } u=v, \\ 
			\nonumber
			- \alpha^2 |\bg_{j,i,u}^H(k)\bff_{j,v}^{\sf MMSE}(k)|^2\! - \alpha(1-\alpha)\!\sum_{\ell}\bff_{j,v}^{\sf MMSE}(\ell)^H{\pmb \Psi}_{N_b}(\ell) 
			\\
			\quad \times {\rm diag}\left({\pmb \Psi}_{N_b}^H\bg_{j,i,u}(k)\bg_{j,i,u}^H(k){\pmb \Psi}_{N_b}\right){\pmb \Psi}_{N_b}^H(\ell)\bff_{j,v}^{\sf MMSE}(\ell),
			\\
                \qquad
			\text{otherwise.}
		\end{cases}
	\end{align}
	
	\begin{proof}
	    See Appendix~\ref{appx:dl_precoder_ofdm}
\end{proof}
\end{corollary}

By Corollary~\ref{cor:dl_precoder_ofdm}, the optimal DL precoder $\bw_{i,u}(k)$ can be derived by using the virtual UL MMSE combiner $\bff_{i,u}^{\sf MMSE}(k)$ with a scalar weight $\sqrt{\tau_{i,u}(k)}$. 
The virtual UL MMSE equalizer, however, is a function of the uncertain noise covariance matrix $\bD_i$ and UL transmit power $\lambda_{i,u}(k)$ as shown in \eqref{eq:mmse}.
Accordingly, we further need to find the optimal virtual noise covariance matrix and UL transmit power to update its corresponding UL MMSE equalizer and DL precoder.

\subsection{Iterative Algorithm via Dual Uplink Solution}
\label{subsec:algorithm}

In this subsection, we characterize the virtual UL solutions by exploiting the strong duality and further adopt the fixed-point iteration with a projected subgradient ascent \cite{dahrouj2010coordinated} to maximize the objective function with respect to the noise covariance matrix of the virtual UL problem.

Instead of directly solving the dual problem in \eqref{eq:dual_problem}, we can decouple it into outer maximization and inner minimization problems where the latter is then written as
\begin{align}
        \label{eq:subproblem}
        f\!\left(\bD_i\right)=&\min_{\lambda_{i,u}(k)\geq0} \;\;  \sum\nolimits_{i,u,k}\lambda_{i,u}(k)\sigma^2\\
        \nonumber
        &{\text{ \rm subject to}}  \;\; \max_{\bff_{i,u}(k)}{\Gamma}^{\rm ul}_{i,u}(k) \geq \gamma_{i,u}(k) \;\;\forall i,u,k.
\end{align}
We notice that the above problem is interpreted as the inner optimization on $\lambda_{i,u}(k)$ of  the dual objective function $g(\tilde\bD_i, \lambda_{i,u}(k))$ where MMSE equalizer in \eqref{eq:mmse} is used for computing $\bff_{i,u}(k)$.
The virtual UL power control solution of \eqref{eq:subproblem} derived in Corollary~\ref{cor:solution_ofdm} is the solution of \eqref{eq:subproblem}. 
\begin{corollary}
\label{cor:solution_ofdm}
For given $\bD_i$, the optimal power for the virtual UL problem in \eqref{eq:dual_problem} is derived as
\begin{align}
    \label{eq:solution_ofdm}
    \lambda_{i,u}(k) = \frac{1}{\alpha \Big(1+1/\gamma_{i,u}(k)\Big){\bf g}_{i,i,u}^H(k) \bK_{i,k}^{-1}({\bLambda}) {\bf g}_{i,i,u}(k)},
\end{align}
where $\bK_{i,k}({\pmb\Lambda})$ is defined as
\begin{align}
        &{\bf K}_{i,k}({\pmb \Lambda}) \!=\! {\bf D}_{i} \! + \!\alpha \!\sum_{j,v} \!\lambda_{j,v}(k){\bf g}_{i,j,v}(k){\bf g}_{i,j,v}^H(k) \nonumber\\
        &+ (1\!-\!\alpha){\pmb \Psi}_{\!N_b}\!(k){\rm diag}\Big(\!{\pmb \Psi}_{N_b}^H\bG_i {\pmb \Lambda} \bG_i^H{\pmb \Psi}_{\!N_b}\!\Big){\pmb \Psi}_{\!N_b}^H\!(k).
        \label{eq:K_matrix}
    \end{align}
\end{corollary}
\begin{proof}
See Appendix~\ref{appx:solution_ofdm}.
\end{proof}
Since $\bK_{i,k}({\bf \Lambda})$ is a function of $\lambda_{i,u}(k)$, we iteratively update  $\lambda_{i,u}(k)$ until convergence. 
Once we obtain the optimal $\lambda_{i,u}(k)$ for the fixed $\bD_i$, we update $\bD_i$ by solving  the outer maximization for a fixed $\lambda_{i,u}(k)$ and then alternate the inner and outer problems until the solutions converge.
We note that the function $f\!\left(\bD_i\right)$ is concave in $\bD_i$ by the fact that  $f\!\left(\bD_i\right)$ is the objective function of a dual problem.
Based on this observation, we adopt a projected subgradient ascent method to maximize the objective function while satisfying the constraints on $\bD_i$ in \eqref{eq:D_const1} and \eqref{eq:D_const2}.

\begin{corollary}[Subgradient]
\label{cor:subgradient}
	 One of the subgradients of \eqref{eq:subproblem} in updating $\bD_i$ is obtained as 
	\begin{align}
	    \label{eq:subgradient}
	    {\rm diag}\Big(\sum\nolimits_{u,k}\bw_{i,k}(k)\bw_{i,k}^H(k)\Big).
	\end{align}
    \begin{proof}
	See Appendix~\ref{appx:subgradient}.
    \end{proof}
\end{corollary}

\begin{algorithm}[t]
	\caption{Quantized CoMP beamforming with per-antenna power (Q-CoMP-PA)
 \label{alg:Q_CoMP_AP}
	}
		{\bf Initialize} $\lambda_{i,u}^{(0)}(k)$, $\forall i,u,k$ and $\bD_{i}^{(0)}$, $\forall i$.\\
		\While{$|\bD_{i}^{(n)}-\bD_{i}^{(n-1)}|\geq \epsilon_{\bD}, \; \forall i$}{
		\While{$|\lambda^{(t)}_{i,u}(k)-\lambda^{(t-1)}_{i,u}(k)|\geq \epsilon_{\lambda}, \; \forall i,u,k$}{
		Compute $\bK_{i,k}({\pmb\Lambda}^{(t)})$ according to \eqref{eq:K_matrix} using $\lambda^{(t)}_{i,u}(k)$ and $\bD_i^{(n)}$.
		\\
        Update $\lambda_{i,u}^{(t+1)}(k)$ $\forall i, u, k$ according to  \eqref{eq:solution_ofdm} as
		\begin{align}
			\nonumber
			&\lambda^{(t+1)}_{i,u}(k) \gets
                \\\nonumber
                &\frac{1}{\alpha \Big(1+\frac{1}{\gamma_{i,u}(k)}\Big){\bf g}_{i,i,u}^H(k) \bK^{-1}_{i,k}({\pmb\Lambda}^{(t)}) {\bf g}_{i,i,u}(k)}
		\end{align}
		$t \gets t + 1$.
        }
		Find the UL MMSE equalizer $\bff_{i,u}^{\sf MMSE}(k)$ in \eqref{eq:mmse} with $\bD_i^{(n)}$ and  $\lambda_{i,u}^{(t)}(k)$.\\
		Compute the DL precoder $\bw_{i,u}(k)$ from Corollary~\ref{cor:dl_precoder_ofdm}.\\
		Update $\bD_i^{(n+1)}$ $\forall i$ using a subgradient ascent method as
		\begin{align}
		    \nonumber
		    \bD_{i}^{(n+1)} \!\gets\!  \bD_{i}^{(n)} \!+\! \eta\; {\rm diag}\Big(\sum\nolimits_{u,k}\bw_{i,u}(k)\bw_{i,u}^H(k)\Big).
		\end{align}
		Project $\bD_{i}^{(n+1)}$ onto the feasible set \eqref{eq:D_const1}-\eqref{eq:D_const2} $\forall i$ until converges as
		\begin{align}
	    	\nonumber
		    \bD_{i}^{(n+1)} \gets {\rm max}\left(0,\frac{\left({\rm tr}\left(\bD_{i}^{(n+1)}\right)-N_b\right)}{\left\|{\bf 1}_{N_b}\right\|^2}{\rm diag}({\bf 1}_{N_b})\right)
		\end{align}
		$n \gets n+1.$
		}
\Return{\ }{$\bw_{i,u}(k), \ \forall i, u, k$}.
\end{algorithm}
We can finally assemble the inner minimization and outer maximization problems into the complete DL problem in \eqref{eq:dual_problem}.
Recall that \eqref{eq:dual_problem} eventually becomes the minimax problem whose main objective is to maximize the concave $f(\bD_i)$ by updating $\bD_i$ while satisfying the constraints on $\bD_i$ and $f(\bD_i)$ wants to minimize the total transmit power of the virtual UL problem. 
Upon obtaining a solution of $f(\bD_i)$ for fixed $\bD_i$ as shown in Corollary~\ref{cor:solution_ofdm}, we update $\bD_i$ by taking a step in the direction of a positive subgradient to establish an increment in the objective function. 

Let $\bD_i^{(n)}$ denote $\bD_i$ at the $n$th iteration. 
Then, the update of $\bD_i^{(n)}$ is performed as
\begin{align}
    \bD_{i}^{(n+1)} =  \bD_{i}^{(n)} + \eta\; {\rm diag}\Big(\sum\nolimits_{u,k}\bw_{i,u}(k)\bw_{i,u}^H(k)\Big),
\end{align}
where $\eta>0$ is a step size.
We remark that the subgradient ascent method is guaranteed to converge to an optimal point since the UL problem is convex.
Since $\bD_i^{(n+1)}$ may escape the feasible domain, we project the updated $\bD_i$ onto the feasible set on $\bD_i$ in \eqref{eq:D_const1} and \eqref{eq:D_const2} as 
\begin{align}
    \label{eq:D_update1}
    \bD_{i}^{(n+1)} =  \frac{\left({\rm tr}\left(\bD_{i}^{(n+1)}\right)-N_b\right)}{\left\|{\bf 1}_{N_b}\right\|^2}{\rm diag}({\bf 1}_{N_b}),
\end{align}
and
\begin{align}
    \label{eq:D_update2}
    \bD_{i}^{(n+1)} =  {\rm max}\left(0,\bD_{i}^{(n+1)}\right), 
\end{align}
respectively. 
The proposed algorithm for the quantized CoMP BF with per-antenna power constraints (Q-CoMP-PA) is described in Algorithm~\ref{alg:Q_CoMP_AP}.

\begin{remark}[Narrowband Systems]
\label{rm:narrow}
\normalfont
The considered quantized OFDM system   naturally reduces to a quantized MIMO narrowband system with $K = 1$ subcarrier.
Accordingly, the derived duality, optimal solutions, and algorithm still hold for the quantized MIMO narrowband communications.
In this case, we omit the subcarrier index $k$ from the system parameters and variables.
\end{remark}

\subsection{Convergence of Sub-problems}

The convergence of each sub-problem can be guaranteed as follows:
regarding the inner minimization problem, the fixed-point iteration for given $\bD_i^{(n)}$ converges as shown in Corollary~\ref{cor:convergence}.
\begin{corollary}
\label{cor:convergence}
For any arbitrary initial points $\lambda_{i,u}^{(0)}(k)$, $\forall i,u,k$, the proposed fixed-point algorithm converges to a unique fixed point at which objective function of inner problem is minimized.


\begin{proof}
See Appendix~\ref{appx:convergence}.
\end{proof}

\end{corollary}
Therefore, the fixed-point iteration always converges to a unique fixed point that is the optimal solution of the inner optimization subproblem for a fixed covariance matrix.

Regarding the outer maximization problem, the iterative projection of $\bD_i^{(n)}$ converges for given $\bw_{i,u}(k)$  as discussed in \cite{bauschke1996projection}: the alternating projection on \eqref{eq:D_const1} and \eqref{eq:D_const2} converges to the intersection of the convex sets if there exists any point in the intersection.
Since $\bD_i=\bI_{N_b}$ belongs to the intersection, the intersection is not an empty set, thereby guaranteeing the convergence.

Accordingly, we theoretically show that each sub-problem can be solved with guaranteed convergence. 
For the convergence of the entire algorithm, we provide the numerical validation in Section~\ref{subsec:convergence}.

\subsection{Computational Complexity}
\label{sec:complexity}
The computational complexity of the algorithms is governed by the computation of $\lambda_{i,u}(k)$, which is dominated by the matrix inversion of $\bK_{i,k}({\bLambda})$. 
Exploiting the Hermitian symmetric and positive semi-definite properties of $\bK_{i,k}({\bLambda})$ can lead to a reduction in the number of floating point operations required for matrix inversion.
Consequently, the computational complexity of the per-cell and Q-CoMP algorithms are $O(T_1 K N_cN_uN_b^3)$ and $O(T_2 K N_cN_uN_b^3)$, respectively, where $T_1$ and $T_2$ denote the number of iterations required for obtaining a converged $\lambda_{i,u}(k)$.
Likewise, the inner optimization loop of the proposed Q-CoMP-PA with $T_3$ iterations has computational complexity of $O(T_3 K N_cN_uN_b^3)$. 
Since the Q-CoMP-PA requires an outer maximization loop with $T_0$ iterations, the total computational complexity of the proposed algorithm is $O(T_0 T_3 K N_cN_uN_b^3)$.
It is worth mentioning that the proposed method can scale linearly with respect to both the number of users and subcarriers.
We note that recently proposed coordinated beamforming algorithms consume a similar or more computational complexity.
For example,  \cite{chen2021joint} and \cite{jafri2022robust} respectively require $O(N_uN_b^4+N_c^6N_u^3)$ and $O(N_b^{3.5}N_u^{3.5})$ computations of multiplications and additions for each BS while both solutions are limited to the single subcarrier systems, i.e., $K=1$.

\section{Simulation Results}
\label{sec:simulation}


We evaluate the derived results of the proposed Q-CoMP-PA algorithm.
We also simulate the quantization-aware CoMP algorithm (Q-CoMP) presented in \cite{choi2020quantized} for comparison. 
Both methods utilize multicell coordination in designing DL precoders.
However, Q-CoMP-PA  aims to minimize the maximum antenna power whereas Q-CoMP minimizes the total transmit power without considering the per-antenna power constraints.
We compare the methods in terms of maximum transmit antenna  power, dynamic range, and PAPR for both wideband and narrowband systems.
We assume the target SQINR $\gamma$ is equal for all users and subcarriers.

\subsection{Wideband OFDM Communications}
\label{subsec:sim_wideband}
We consider the wideband OFDM systems assuming the delay spread is $L=3$ and the small scale fading follows Rayleigh fading with zero mean and unit variance.
We consider a $24 \ \rm GHz$ carrier frequency with $100\  \rm MHz$ bandwidth.
The adjacent BSs are $200\  \rm m$ apart and the minimum distance between any BS and user is $50 \, \rm m$.
For large scale fading, we adopt the log-distance pathloss model in \cite{akdeniz2014millimeter} with the $72\ \rm dB$ intercept,  the pathloss exponent of $2.92$, and the shadow fading whose shadowing variance is $8.7\ \rm dB$. Noise power is computed with $-174\ \rm dBm/Hz$ power spectral density and $5\ \rm dB$ noise figure. We also use the sector antenna gain of $15\ \rm dB$.

\begin{figure}[!t]\centering
	\includegraphics[width=1\columnwidth]{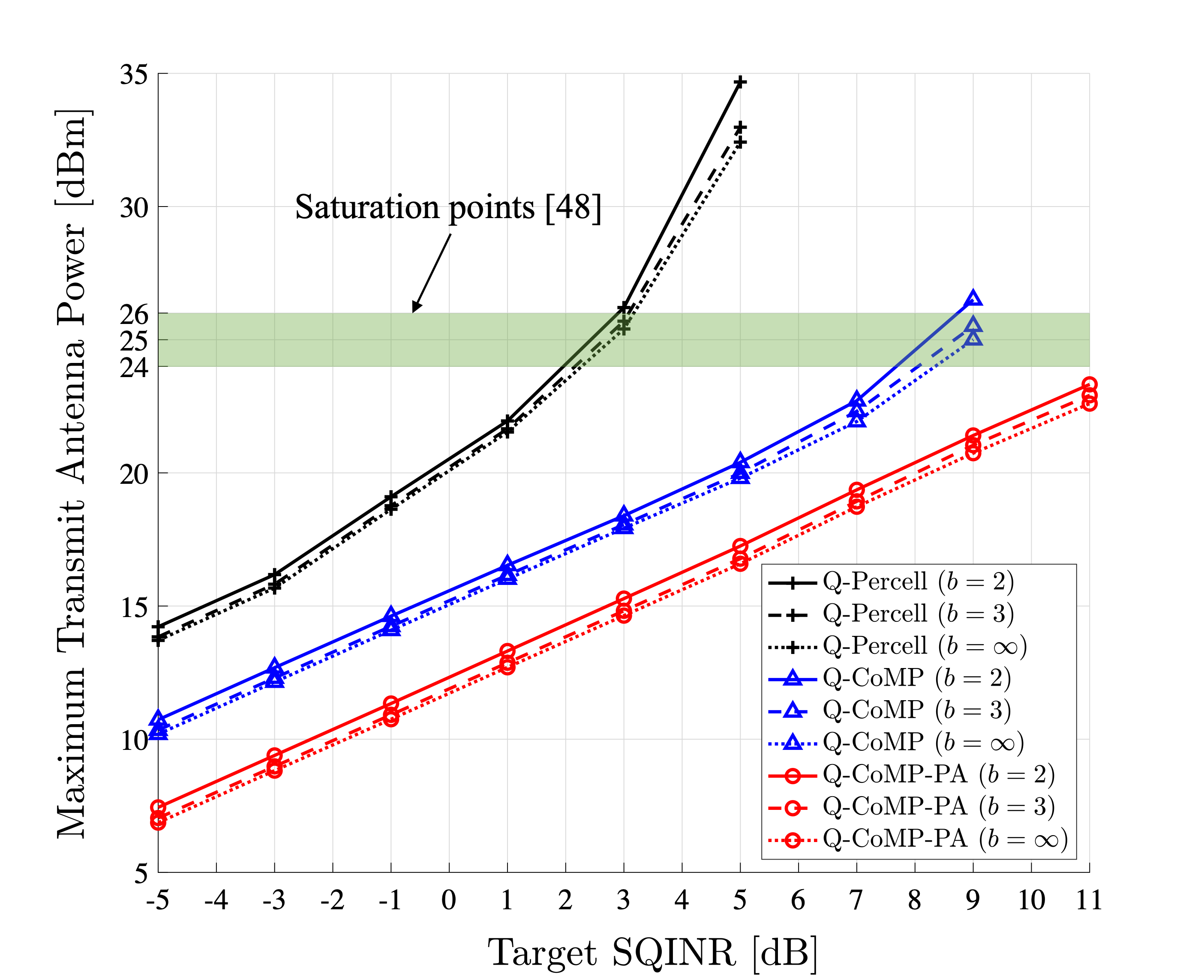}
	\caption{Maximum transmit power versus the target SQINR for the network with $N_b = 16$ antennas, $N_c = 4$ cells,  $N_u = 2$ users per cell, $K=32$ subcarriers, and $b \in \{2,3,\infty\}$ bits. 
} 
	\label{fig:power_wb}
\end{figure}

We  evaluate the performance of the Q-CoMP-PA and Q-CoMP algorithms in wideband OFDM communication systems. 
We also simulate the quantization-aware per-cell based CoMP (Q-Percell) algorithm as an additional benchmark by adapting the algorithm in \cite{rashid1998transmit} to the considered system.
For the Q-Percell algorithm, each BS first discovers its optimal solution by treating the inter-cell interference as noise and assuming that it is fixed.
Once the BSs derive solutions for the given inter-cell interference, the BSs share the converged solutions to update the interference power and compute solutions again.
These steps repeat until the solutions converge.
Consequently, Q-Percell is expected to be far sub-optimal than the other algorithms.

\begin{figure}[!t]\centering
	\includegraphics[width=1\columnwidth]{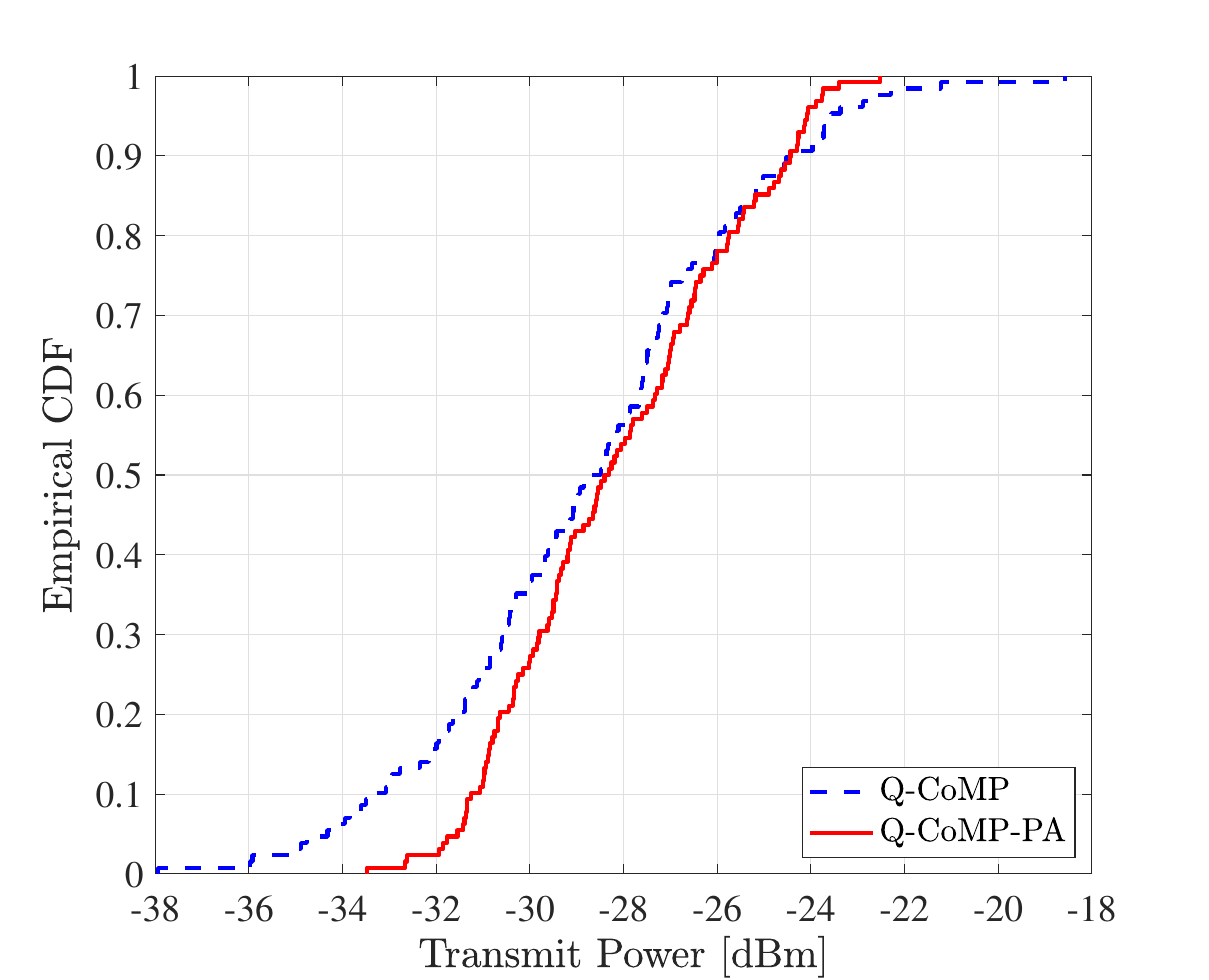}
	\caption{
		Empirical CDFs of transmit power values in the network with $N_b = 32$ antennas, $N_c = 4$ cells,  $N_u = 2$ users per cell, $K=64$ subcarriers, $b=3$ bits, and $\gamma=-1 \ \rm dB$ target.
        } 
	\label{fig:cdf}
\end{figure}

In Fig.~\ref{fig:power_wb}, we present the maximum transmit antenna power in DL direction across all $N_c N_b$ transmit antennas which is $p_0$ of the primal DL problem that we want to minimize for given target SQINRs.
We consider $N_b=16$ antennas, $N_c = 3$ cells, $N_u=2$ users per cell, $K=32$ subcarriers, and $b\!\in\!\{2,3,\infty\}$ DAC bits.
We see that the Q-Percell method consumes much higher transmit power than the Q-CoMP method and proposed Q-CoMP-PA method and shows divergence in the maximum transmit power as the target SQINR increases.
In contrast to the Q-Percell algorithm, the Q-CoMP and Q-CoMP-PA methods with multicell coordination scale linearly with the target SQINR without implausible power consumption or unfavorable divergence.
Nevertheless, based on the primal objective of the Q-CoMP-PA, the proposed algorithm can limit the maximum transmit power providing around $3 \ \rm dB$ reduction over the regular Q-CoMP. 
From both Q-CoMP and Q-CoMP-PA methods, when using infinite-resolution quantizers, we have lower peak power compared to the one with 2-bit or 3-bit data converters; however, the gap between $b=2$, $b=3$, and $b=\infty$ cases is marginal for both Q-CoMP and Q-CoMP-PA by properly taking the coarse quantization error and inter-cell interference into account. 

We further note that transmit power higher than the saturation input triggers RF nonlinearity, thereby causing a substantial reduction in RF output power and many undesirable additional frequencies compared to the ideal linear amplification regime of RF power amplifiers. 
As most of the state-of-the-art RF power amplifiers introduced in \cite{vasjanov2018review} have a saturation input of between $24 \ \rm dBm$ and $26 \ \rm dBm$, we can interpret from Fig.~\ref{fig:power_wb} that the per-cell and Q-CoMP methods escape the efficient amplification regime at around $3 \ \rm dB$ and $8 \ \rm dB$ target SQINR, respectively, while the proposed Q-CoMP-PA has stable and linear amplification at the simulated target SQINR values.



Fig.~\ref{fig:cdf} shows the empirical cumulative density function (CDF) of the transmit power by collecting the computed transmit power from all $N_bN_c$ BS antennas.
We employ a communication network with $N_b=32$, $N_c = 4$, $N_u=2$, $K=64$, $b=3$, and $\gamma=-1 \ \rm dB$.
Regarding the peak power, the proposed Q-CoMP-PA method achieves around $4 \ \rm dB$ lower maximum power over Q-CoMP, which corresponds to the main purpose of the primal problem.
In addition, the dynamic range defined as the gap between the minimum and maximum antenna power is narrower for Q-CoMP-PA compared to Q-CoMP, thereby showing more even power distribution across antennas.
Solving the maximum power minimization, as opposed to the sum power objective, can eventually simplify the hardware operation and avoid distortion from the nonlinearity of the power amplifier.
These results unleash the potential to enhance scalablity and practicality of CoMP-based MIMO systems.


\subsection{Narrowband Communications}

As a special case, we simulate over the narrowband channel, i.e.\ $K=1$, as stated in Remark~\ref{rm:narrow}.
We  assume that the small scale fading of each channel fading coefficient also follows Rayleigh fading with zero mean and unit variance.
For the large scale fading, we adopt the log-distance pathloss model in \cite{erceg1999empirically}. 
The distance between adjacent BSs is $2 \, \rm km$.
The minimum distance between BS and user is $100 \, \rm m$.
We consider a $2.4 \, \rm GHz$ carrier frequency with $10 \, \rm MHz$ bandwidth and use the same large scale fading model as the one used in Section~\ref{subsec:sim_wideband}


\begin{figure}%
    \centering
    \subfigure[Maximum transmit antenna power versus target SQINR with $N_c = 4$ cells, and $b \in \{2,3,\infty\}$.]
    {{\includegraphics[width=8cm]{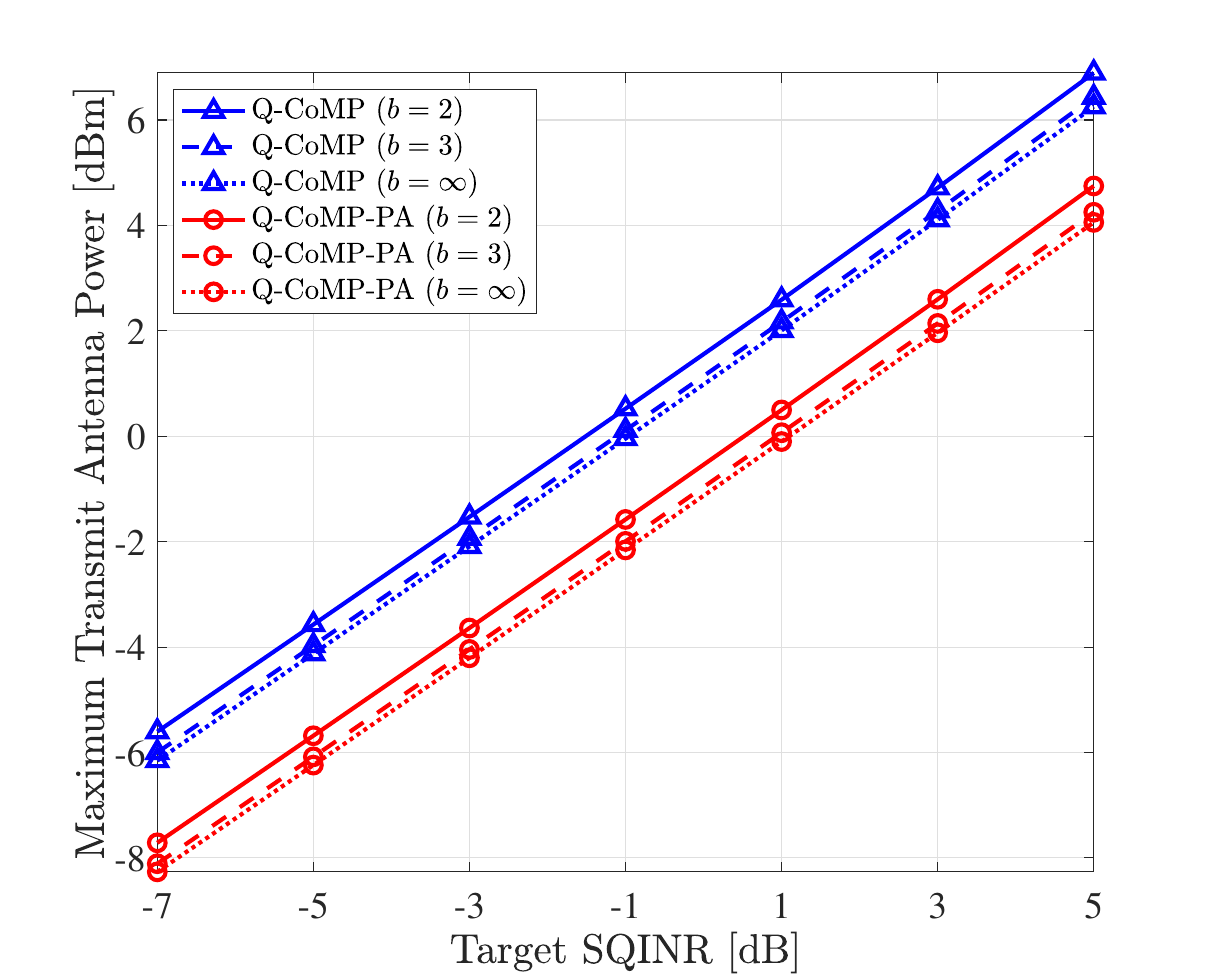} }}%
    \quad
    \subfigure[Empirical CDFs with $N_c = 5$ cells, $b=3$ quantization bits, $\gamma=2 \ \rm dB$ target SQINR.]{{\includegraphics[width=8cm]{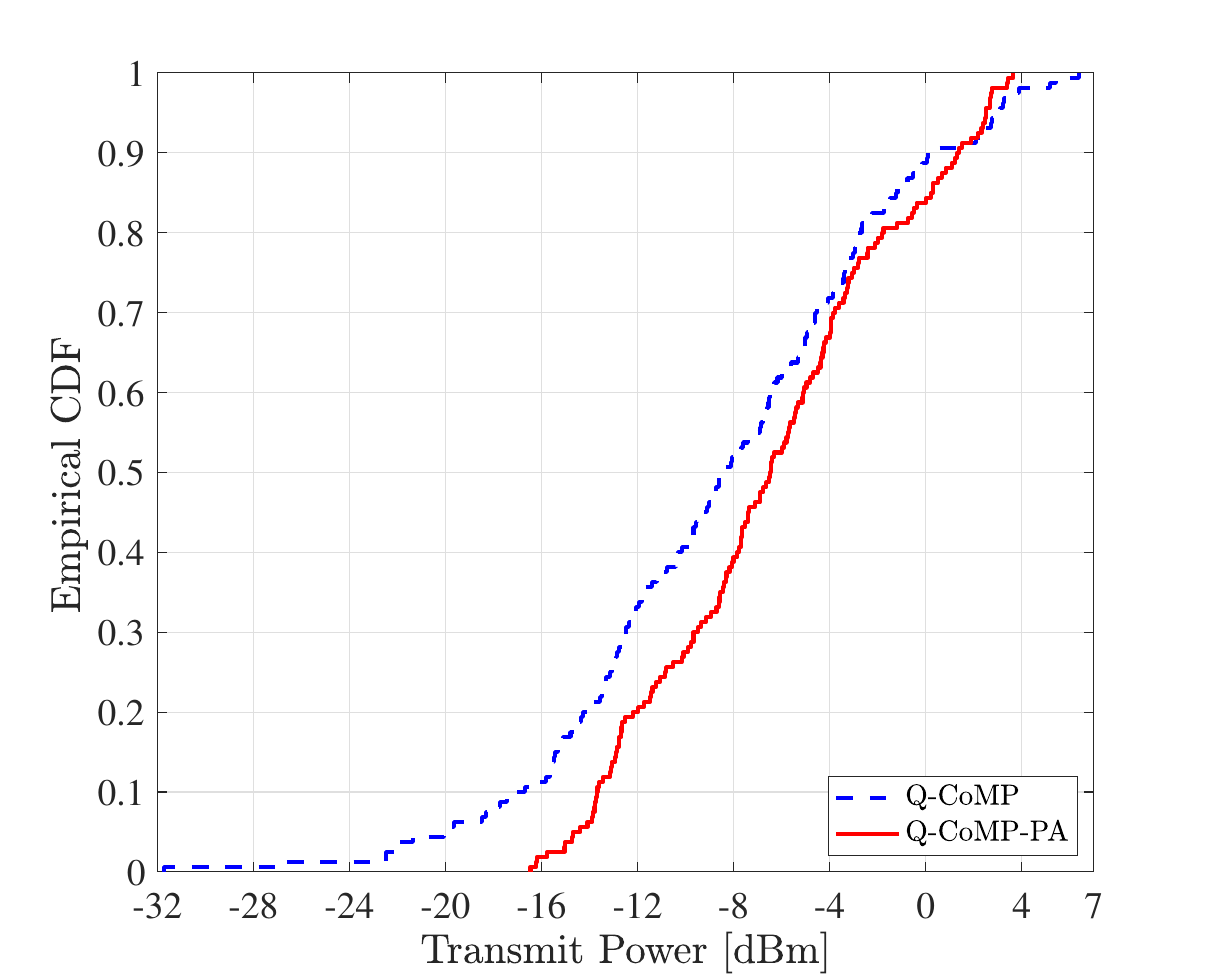} }}%
    \caption{Simulations for the narrowband communication networks with $N_b = 32$ antennas and $N_u = 2$ users per cell.}%
    \label{fig:narrowband}%
\end{figure}
Fig.~\ref{fig:narrowband}(a) illustrates the maximum transmit antenna power across $N_c N_b$ transmit antennas. 
We consider $N_b=32$, $N_c = 4$,  $N_u=2$, and  $b\in\{2,3,\infty\}$ bits.
The benefit of the proposed Q-CoMP-PA method identified in the wideband simulation can also be confirmed in the narrowband case.
By properly incorporating the quantization noise and inter-cell interference in designing $\bw_{i,u}$, the proposed Q-CoMP-PA method can achieve the $3 \ \rm dB$ reduction over Q-CoMP in peak transmit power without experiencing undesirable divergence at the simulated target SQINR values.

Fig.~\ref{fig:narrowband}(b) shows the CDF of the transmit power of all antennas.
We use $N_b=32$, $N_c = 5$, $N_u=2$, and $b=3$.
Q-CoMP-PA achieves more than $3 \ \rm dB$ reduction in the maximum transmit antenna power and operates with a much narrower dynamic range than  Q-CoMP.
This result validates that the proposed algorithm also outperforms  Q-CoMP in minimizing the transmit antenna power under the quality-of-service constraints for the narrowband communication system.


\subsection{Convergence}
\label{subsec:convergence}

\begin{figure}[!t]\centering
	\includegraphics[width=0.98\columnwidth]{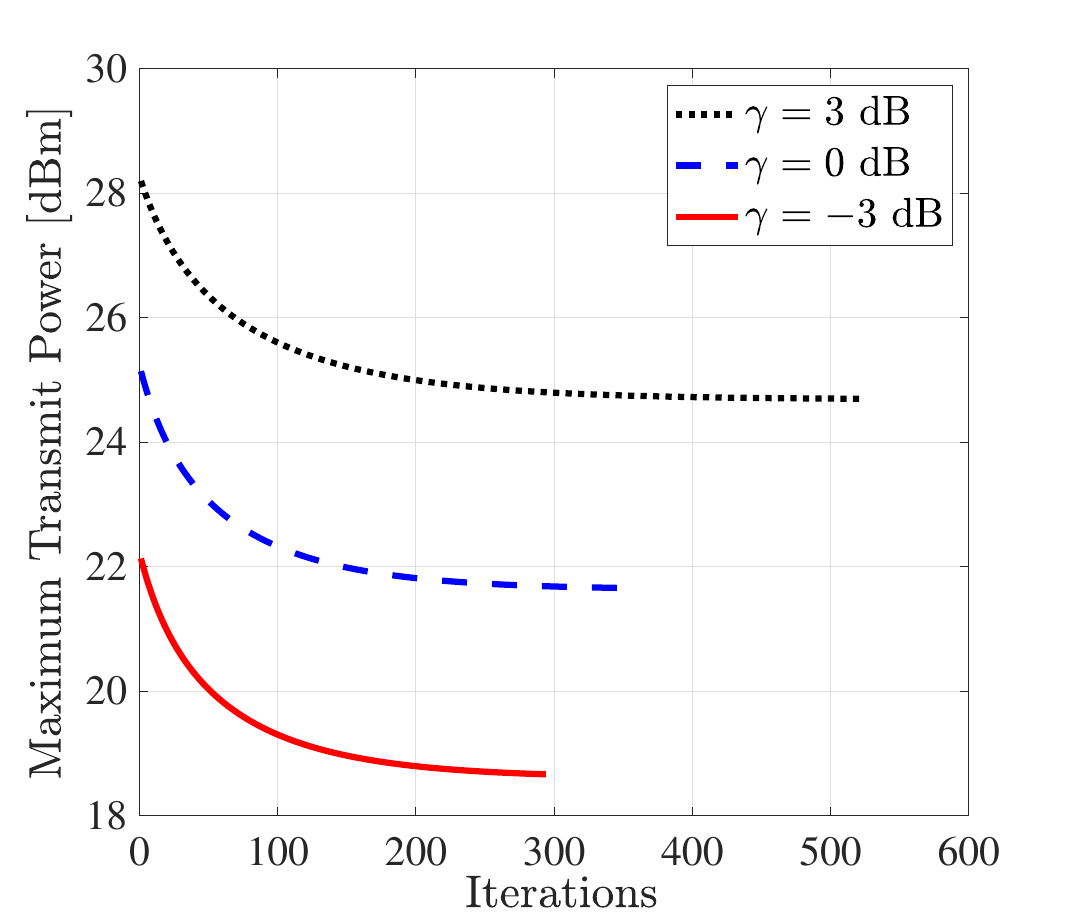}
	\caption{
		Convergence behavior for $N_b = 32$ antennas, $N_c = 2$ cells, $N_u = 3$ users per cell,  $b =3$ bits, $K=64$ subcarriers, and $\gamma \in\{-3, 0, 3\}\ \rm dB$.
		} 
	\label{fig:convergence}
\end{figure}

Fig.~\ref{fig:convergence} shows progress in maximum transmit power with respect to the number of iterations of Q-CoMP-PA until the stopping condition is met considering $\gamma \in\{-3, 0, 3\}\ \rm dB$, $b= 3$, $N_b = 32$, $N_c = 2$, $N_u = 3$, and $K=64$. 
Note that we plot $T_0T_3$ defined in the complexity analysis in Section~\ref{sec:complexity}.
It can be seen that more iterations are needed and the algorithm converges to a higher peak power as the target SQINR increases.

\begin{figure}[!t]\centering
	\includegraphics[width=1\columnwidth]{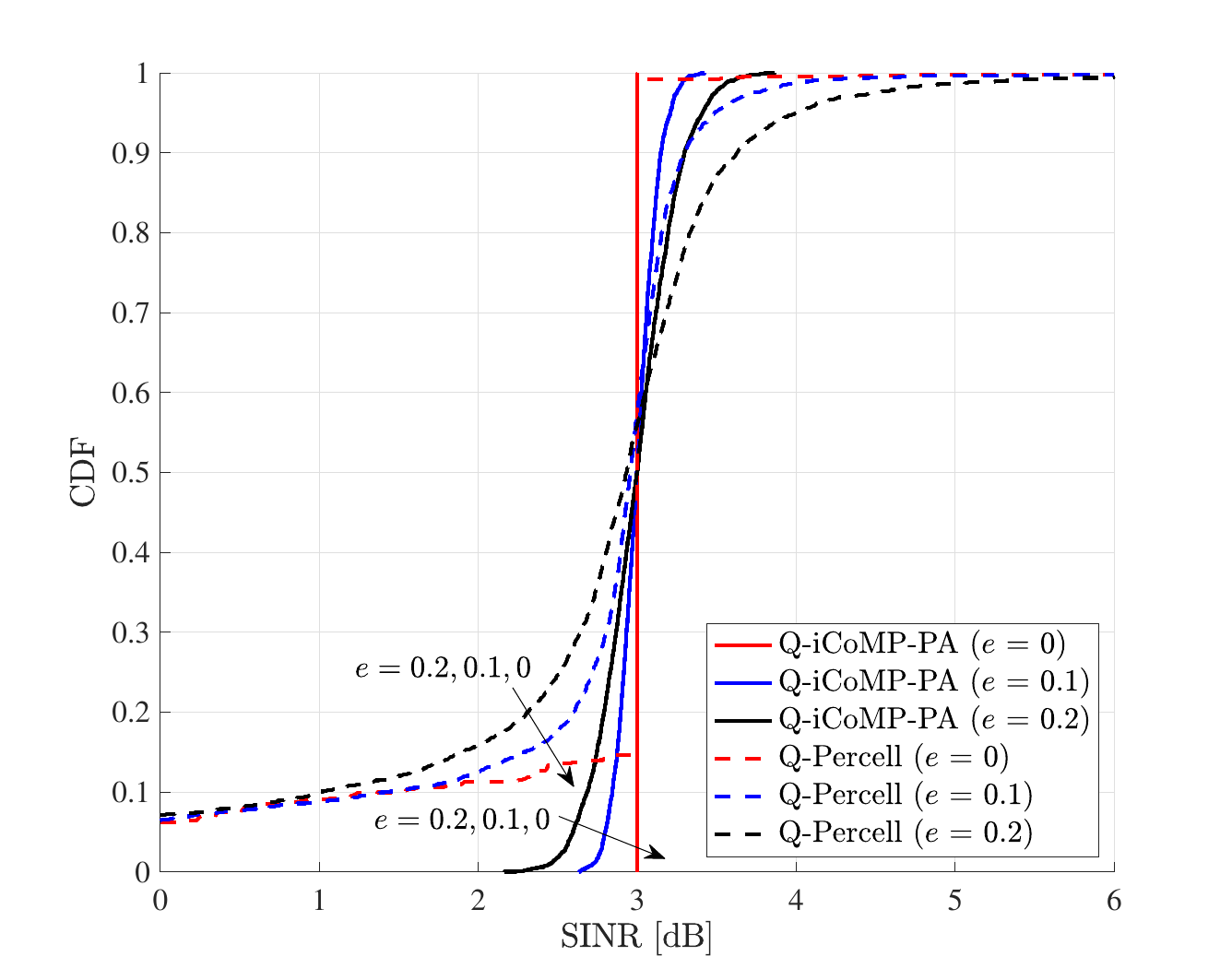}
	\caption{CDFs of the achieved SQINRs for $\gamma = 3\  \rm dB$ target SQINR, $b= 3$ bits, $N_b = 32$ antennas, $N_c = 4$ cells, and $N_u = 3$ users per cell with channel error factors $e \in \{0,0.1,0.2\}$.
	} 
	\label{fig:herror}
\end{figure}

\subsection{Channel Estimation Error}
To assess the effects of channel estimation error, we employ a Gaussian channel estimation error model \cite{wang2006adaptive} that defines the estimated UL channel between BS $i$ and user $u$ in cell $j$ as 
\begin{equation}
    \hat{\bh}_{i,j,u} = \sqrt{(1-e^2)}\;\bh_{i,j,u} + e\;{\be}_{i,j,u},
\end{equation}
where $e$ dictates the channel estimation error factor, and $\be_{i,j,u}$ is an error vector whose elements follow $\mathcal{CN}(0, \rho_{i,j,u})$ in which $\rho_{i,j,u}$ denotes large scale fading between BS$_i$ and user $u$ in cell $j$.

Fig.~\ref{fig:herror} shows CDFs of the achieved SQINR for channel estimation error factor $e \in \{0, 0.1, 0.2\}$, $\gamma = 3\  \rm dB$, $b= 3$, $N_b = 32$, $N_c = 4$, and $N_u = 3$.
As $e$ increases, the deviation from the target SQINR increases, resulting in less accurate performance in  achieved SQINR.
The proposed Q-CoMP-PA algorithm is more robust to the estimation error than the Q-Percell method, showing a smaller deviation from the target SQINR $\gamma = 3\  \rm dB$.

\begin{table}[!t]
\caption{Comparison of peak-to-average power ratio (PAPR) for selected target SQINR values.}
\begin{center}
\begin{tabular}{|c|c|c|c|}
	\hlineB{2}
	$\gamma$ [dB] &  Q-Percell & Q-CoMP & {\bf Q-CoMP-PA} \\
	\hline
	1 & 8.21 dB & 4.98 dB & {\bf 2.96 dB} \\
	\hline
	-5 & 7.99 dB & 4.93 dB & {\bf 2.88 dB} \\
	\hlineB{2}
\end{tabular}
\\
\vspace{0.5cm}
(a) PAPR of the wideband system with $N_b = 32$ antennas, $N_c = 3$ cells, $N_u = 2$ users per cell,  $b =3$ bits, and $K=64$.\\

\vspace{0.3cm}

\begin{tabular}{|c|c|c|c|}
\hlineB{2}
$\gamma$ & \multicolumn{3}{c|}{{$N_c=2$ cells}}\\
\cline{2-4} 
[dB] & Q-Percell&  Q-CoMP & {\bf Q-CoMP-PA} \\
\hline
2 & 4.02 dB & 2.96 dB & {\bf 2.18 dB}\\
\hline
-3 & 3.97 dB & 2.94 dB & {\bf 2.07 dB}\\
\hlineB{2}
\end{tabular}
\vspace{0.3cm}

\begin{tabular}{|c|c|c|c|}
\hlineB{2}
$\gamma$ & \multicolumn{3}{c|}{{$N_c=4$ cells}}\\
\cline{2-4}
[dB] & Q-Percell&  Q-CoMP & {\bf Q-CoMP-PA} \\
\hline
2 & 5.35 dB & 4.21 dB & {\bf2.25 dB}\\
\hline
-3 & 5.58 dB & 4.42 dB & {\bf2.55 dB}\\
\hlineB{2}
\end{tabular}
\vspace{0.3cm}

\begin{tabular}{|c|c|c|c|}
\hlineB{2}
$\gamma$ & \multicolumn{3}{c|}{{$N_c=6$ cells}}\\
\cline{2-4}
[dB] & Q-Percell&  Q-CoMP & {\bf Q-CoMP-PA} \\
\hline
2 & 5.49 dB & 4.57 dB & {\bf2.51 dB}\\
\hline
-3 & 5.96 dB & 4.81 dB & {\bf3.02 dB}\\
\hlineB{2}
\end{tabular}
\\
\vspace{0.5cm}
(b) PAPR of the narrowband system with with $N_b = 32$ antennas, $N_c \in \{2,4,6\}$ cells, $N_u = 2$ users per cell, and $b =3$ bits.

\label{table:papr}
\end{center}
\end{table}

\subsection{Peak-to-Average-Power Ratio}
In this subsection, we simulate the PAPR of the transmitted signals.
Since a large peak amplitude causes severe performance loss, reducing the PAPR is considered as one of the most crucial topics in OFDM systems.
In Table~\ref{table:papr}(a), we analyze a wideband communication network  with $N_b = 16$, $N_c=3$,  $N_u = 2$, $K=64$, $b=3$, and $\gamma\in\{-5,1\}\ \rm dB$.
It is observed that  Q-CoMP-PA achieves the lowest PAPR for the considered target SQINRs, reducing it by more than $2 \ \rm dB$ and $5 \ \rm dB$ from Q-CoMP and Q-Percell, respectively.
Table~\ref{table:papr}(b) considers a narrowband system with $N_b = 32$, $N_c \in \{2,4,6\}$,  $N_u = 2$, $b=3$, and $\gamma\in\{-3,2\} \ \rm dB$.
Q-CoMP-PA still achieves significant reduction in PAPR, showing more than $1.8 \ \rm dB$ gain over Q-CoMP. 
It is also shown that Q-CoMP-PA maintains similar PAPR over different SQINR targets or the different number of cells.
 The largest PAPR is required by Q-Percell with limited multicell coordination, highlighting the necessity of proper multicell coordination.
Thus, Q-CoMP-PA is more favorable for communication systems by limiting the peak power.

\section{Conclusion}
\label{sec:conclusion}

We investigated the DL OFDM CoMP beamforming problem with low-resolution quantizers and employed per-antenna power constraints to give a more practical power-efficient solution.
The DL antenna power minimization problem with SQINR constraints was formulated as a primal problem, and we derived the associated dual problem.
We interpreted the dual problem as a virtual UL power minimization problem with respect to the transmit power and noise covariance matrix.
Subsequently, we showed that strong duality holds between the primal DL and dual UL problems.
Inspired by the strong duality, we proposed an iterative DL beamforming algorithm.
To this end, we addressed the inner and outer problems of the dual in an alternating manner by solving the inner power control problem via fixed-point iteration and the outer noise covariance design problem via projected subgradient ascent.
Then, the DL beamforming solutions were directly obtained from the UL solutions through linear transformation.
In simulation, we demonstrated the proposed Q-CoMP-PA algorithm is effective by reducing per-antenna transmit power while achieving target requirements over a wide variety of system settings.
The derived duality can provide insight for designing  power-efficient communication networks and the proposed method can contribute to realizing highly efficient and reliable future communication networks.

\begin{appendices}
\renewcommand{\thesectiondis}[2]{\Alph{section}:}

\section{Proof of Theorem~\ref{thm:duality_ofdm}}
\label{appx:duality_ofdm}

    We show that the Lagrangian dual problem of the primal DL problem in \eqref{eq:dl_problem}-\eqref{eq:per_antenna_const} is equivalent to the virtual UL power minimization problem. 
    We first acquire the Lagrangian dual problem of \eqref{eq:dl_problem}-\eqref{eq:per_antenna_const}.
    To this end, we rewrite the per-antenna constraint for  antenna $m$ in \eqref{eq:per_antenna_const}  as
    \begin{align}
    	 \Big[\bbE[\bx_{{\rm q},i}(k)\bx_{{\rm q},i}^H(k)]\Big]_{m,m} 
    	  \!\!=\! \left[\frac{\alpha}{K}\!\sum\nolimits_{\ell=0}^{K-1}\!\bW_i(\ell)\bW_i^H(\ell)\right]_{m,m}\!.
    	  \label{eq:per_antenna_const2}
    \end{align}
    We multiply the objective function in \eqref{eq:dl_problem} by $KN_cN_b$ without affecting the solution.
    With non-negative Lagrangian multipliers $\mu_{i,u}(k)$ and $\nu_{i,m}(k)$, the Lagrangian of the DL problem becomes 
    \begin{align}
         &{\cL\left(\bw_{i,u}(k),\mu_{i,u}(k),\nu_{i,m}(\ell)\right)}
         =  \, KN_cN_b p_{\rm 0} \label{eq:lagrangian}\\
         &- \sum\nolimits_{i,u,k}\mu_{i,u}(k)\Big(\alpha^2|\bw_{i,u}^H(k)\bg_{i,i,u}(k)|^2/\gamma_{i,u}(k)  \nonumber \\ 
         &-\alpha^2 \!\!\!\!\! \sum_{(j,v) \neq (i,u)}^{N_c, N_u}\!\!|\bg_{j,i,u}^H(k)\bw_{j,v}(k)|^2  - {\rm Q}_{i,u}(k) - \sigma^2\Big) 
         \nonumber\\
         &+  \sum_{i,m,\ell}\nu_{i,m}(\ell) \!\!
         \left(\left[\frac{\alpha}{K}\sum\nolimits_{\ell=0}^{K-1}\bW_i(\ell)\bW_i^H(\ell)\right]_{m,m}
        \!\!\!\!\!\!- p_{\rm 0}   \right). \nonumber
    \end{align}
    
    To make the Lagrangian tractable, we start by rewriting $\sum_{i,u,k}\mu_{i,u}(k){\rm Q}_{i,u}(k)$  to manipulate $\bW_i$ which is embedded in $\bC_{{\bf q}_i}$ of ${\rm Q}_{i,u}(k)$. 
    By mapping the indices from $(i,u,k,j)$ to $(j,v,\ell,i)$, we have
    \begin{align}
        &\sum_{i,u,k}\mu_{i,u}(k){\rm Q}_{i,u}(k)\\
        &=\sum_{i,u,k}\mu_{i,u}(k)\sum_{j=1}^{N_c}\ubg_{j,i,u}^H\!(k){\pmb \Psi}_{N_b}\bC_{\bq_j,\bq_j}{\pmb \Psi}_{N_b}^H\ubg_{j,i,u}(k)
        \\ \label{eq:quant_noise_dl_ofdm1}
        &=\sum_{j,v,\ell,i}\mu_{j,v}(\ell)\ubg_{i,j,v}^H\!(\ell){\pmb \Psi}_{N_b}   \times \\
        &\quad{\rm diag}\bigg(\sum\nolimits_{u,k}|{\pmb \psi}_{N_b,m}^H(n)\ubw_{i,u}(k)|^2, \forall m,n\!\bigg){\pmb \Psi}_{N_b}^H\ubg_{i,j,v}(\ell) \nonumber
    \end{align}
    where ${\pmb \psi}_{N_b,m}(n)$ is the $\left(m+(n-1)N_b\right)$th column of ${\pmb \Psi}_{N_b}$, i.e., ${\pmb \psi}_{N_b,m}(n) = [\bw_{{\rm DFT},n}\otimes\bI_{N_b}]_{:,m}$ for $m= 1,\dots,N_b$, $n=1,\dots,K$,   and $\ubw_{i,u}(k)$ is the $(kN_u+u)$th column of $\bW_i$.
    Let  $\bM_i(k) = {\rm diag}(\mu_{i,1}(k),\dots,\mu_{i,N_u}(k))$, $\bM_i = {\rm blkdiag}\big(\bM_i(0),\dots,\bM_i(K-1)\big)$, and $\bM = [\bM_{1},\dots, \bM_{N_c}]$.
    Recalling that ${\pmb \Psi}_{N_b}(k) = \big([\bW_{{\rm DFT}}]_{k+1,:}\otimes \bI_{N_b}\big)$ and $\bG_i = [\bG_{i,1},\dots, \bG_{i,N_c}]$,  \eqref{eq:quant_noise_dl_ofdm1} is rewritten as 
\begin{align}
        \nonumber
        &\sum_{j,v,\ell,i}\mu_{j,v}(\ell) \sum_{m,n}\Bigg(\sum_{u,k}|{\pmb \psi}_{N_b,m}^H(n)\ubw_{i,u}(k)|^2\times\\ &\left(\sum_{r}{\underline{g}}^*_{i,j,v,r}(\ell)\psi_{N_b,m,r}(n)\right)\!\!\left(\sum_{r'}{\underline{g}}_{i,j,v,r'}(\ell)\psi^*_{N_b,m,r'}(n)\right)\Bigg)\nonumber\\
        \nonumber
        & = \sum_{i,u,k}\ubw_{i,u}^H(k)\Bigg(\sum_{m,n}{\pmb \psi}_{N_b,m}(n)\Bigg(\sum_{j,v,\ell}\mu_{j,v}(\ell){\pmb \psi}_{N_b,m}^H(n)\times\\
        &\qquad\ubg_{i,j,v}(\ell){\ubg}^H_{i,j,v}(\ell){\pmb \psi}_{N_b,m}(n)\Bigg)\!{\pmb \psi}_{N_b,m}^H(n)\!\Bigg)\ubw_{i,u}(k)\\
        \nonumber
        & = \sum\nolimits_{i,u,k}\ubw_{i,u}^H(k){\pmb \Psi}_{N_b}{\rm diag}\left({\pmb \Psi}_{N_b}^H\bG_i\bM\bG_i^H{\pmb \Psi}_{N_b}\right){\pmb \Psi}_{N_b}^H\ubw_{i,u}(k)
        \nonumber\\
        \label{eq:thm_proof1}
        & \stackrel{(a)}= \sum\nolimits_{i,u,k}\bw_{i,u}^H(k){\pmb \Psi}_{N_b}(k){\rm diag}\Big({\pmb \Psi}_{N_b}^H\bG_i\bM\bG_i^H{\pmb \Psi}_{N_b}\Big)\times\nonumber \\
        &\qquad\qquad{\pmb \Psi}_{N_b}^H(k)\bw_{i,u}(k).
\end{align}
   Here, $(a)$ comes from $\ubw_{i,u}^H(k){\pmb \Psi}_{N_b} = \bw_{i,u}^H(k){\pmb \Psi}_{N_b}(k)$ since $\ubw_{i,u}(k)$ has nonzero elements $\bw_{i,u}(k)$ only in the position that corresponds to the precoder for subcarrier $k$, and ${\underline{g}}_{i,j,v,r}(\ell)$ and ${\psi}_{N_b,m,r}(n)$ are the $r$th elements of $\ubg_{i,j,v}(\ell)$ and ${\pmb \psi}_{N_b,m}(n)$, respectively.

    Next, we define $\tilde\bD_i(k) = {\rm diag}(\nu_{i,1}(k), \dots, \nu_{i,N_b}(k))$. 
    By switching the indices between $k$ and $\ell$, we can cast $\sum_{i,m,k}\nu_{i,m}(k)\left[\sum_{\ell=0}^{K-1}\bW_i(\ell)\bW_i^H(\ell)\right]_{m,m}$  in \eqref{eq:lagrangian} to
    \begin{align}
    	\nonumber
    	&\sum\nolimits_{i,m,\ell}\nu_{i,m}(\ell)\bigg[\sum\nolimits_{u,k}\bw_{i,u}(k)\bw_{i,u}^H(k)\bigg]_{m,m} \\
     &=  \sum\nolimits_{i,m,\ell}\sum\nolimits_{u,k} w^*_{i,u,m}(k)\nu_{i,m}(\ell)w_{i,u,m}(k)\\
    	\label{eq:thm_proof2}
    	& = \sum\nolimits_{i,u,k}\bw^H_{i,u}(k)\tilde\bD_i\bw_{i,u}(k),
   	\end{align}
    where $w_{i,u,m}(k)$ is the $m$th element of $\bw_{i,u}(k)$ and $\tilde\bD_i = \sum_{\ell=0}^{K-1}\tilde \bD_i(\ell)$.
    Further, $\sum_{i,m,\ell}\nu_{i,m}(\ell)p_{\rm 0}$ in \eqref{eq:lagrangian}  can be redrafted as
    \begin{align}
    	\label{eq:thm_proof3}
    	p_{\rm 0}\sum\nolimits_{i,m,\ell}\nu_{i,m}(\ell) = p_{\rm 0}\sum\nolimits_{i=1}^{N_c}{\rm tr}(\tilde\bD_i).
    \end{align}

    By applying \eqref{eq:thm_proof1}, \eqref{eq:thm_proof2}, and \eqref{eq:thm_proof3} to the Lagrangian in \eqref{eq:lagrangian}, the Lagrangian is reformulated as
    \begin{align}	
     \label{eq:lagrangian_reform}
        	&{\cL\left(\bw_{i,u}(k),\mu_{i,u}(k),\nu_{i,m}(\ell)\right)}
        	=  \\
         &\sum\nolimits_{i,u,k}\mu_{i,u}(k)\sigma^2- p_{\rm 0}\sum\nolimits_{i}\big[{\rm tr}(\tilde\bD_i)-KN_b\big]
        	\nonumber \\
         \nonumber
&+\sum\nolimits_{i,u,k}\bw_{i,u}^H(k)\Bigg(\frac{\alpha}{K}\tilde\bD_{i} - \alpha^2\Big(1+1/\gamma_{i,u}(k)\Big)\times 
\\
&\mu_{i,u}(k)\bg_{i,i,u}(k)\bg_{i,i,u}^H(k)\!+\!\alpha^2\sum\nolimits_{j,v}\mu_{j,v}(k)\bg_{i,j,v}(k)\bg_{i,j,v}^H(k)\nonumber\\
&\!+\!\alpha(1-\alpha){\pmb \Psi}_{N_b}(k){\rm diag}\!\left({\pmb \Psi}_{N_b}^H\bG_i\,\bM\,\bG_i^H{\pmb \Psi}_{N_b}\right){\pmb \Psi}_{N_b}^H(k)\Bigg)\bw_{i,u}(k). \nonumber
   	 	\end{align}
    Let us define the dual objective function as 
    \begin{align}
        \label{eq:dual_objective}
        g(\mu_{i,u}(k)),\nu_{i,m}(\ell)) = \min_{{\bf w}_{i,u}(k),p_{\rm 0}} {\cL(\bw_{i,u}(k),\mu_{i,u}(k),\nu_{i,m}(\ell))}
    \end{align}
    and let $\bD_i = \frac{1}{K}\tilde\bD_i$. 
   	To avoid an unbounded objective function, we need ${\rm tr}(\bD_i)\leq N_b$ and
   	\begin{align}
       	\label{eq:K_matrix2}
        {\bf K}_{i,k}(\bM) &\!=\! {\bf D}_{i} \! + \!\alpha \!\sum_{j,v} \!\mu_{j,v}(k){\bf g}_{i,j,v}(k){\bf g}_{i,j,v}^H(k) \\
        &\!+\! (1\!-\!\alpha){\pmb \Psi}_{\!N_b}\!(k){\rm diag}\!\Big(\!{\pmb \Psi}_{N_b}^H\bG_i \bM \bG_i^H{\pmb \Psi}_{\!N_b}\!\Big){\pmb \Psi}_{\!N_b}^H\!(k) \nonumber
        \\
         &\succeq \alpha \big(1 + 1/{\gamma_{i,u}(k)}\big)\mu_{i,u}(k)  {\bf g}_{i,i,u}(k){\bf g}_{i,i,u}^H(k).
   	\end{align}
    Consequently, the Lagrangian dual problem of \eqref{eq:dl_problem} can be eventually formulated as
    \begin{align}
        \label{eq:dual_problem_reduced_inv}
        \max_{{\bD}_i }\max_{\mu_{i,u}(k)}& \;\;  \sum\nolimits_{i,u,k}\mu_{i,u}(k)\sigma^2\\
        \label{eq:dual_SQINR_const_reduced_inv}
        {\text{ \rm subject to}} & \;\; {\bf K}_{i,k}(\bM) \succeq  \\
        &\;\; \alpha \big(1+1/\gamma_{i,u}(k)\big)\mu_{i,u}(k)  {\bf g}_{i,i,u}(k){\bf g}_{i,i,u}^H(k),\nonumber \\
        &\;\;  \bD_i \succeq 0, \ \bD_i\in \bbR^{N_b\times N_b}: \text{\rm diagonal}, \\
        &\;\;  {\rm tr}(\bD_i) \leq N_b \quad \forall i,u,k.
   	\end{align}

Now, let us show that the problem in \eqref{eq:dual_problem_reduced_inv} is equivalent to the problem in Theorem~\ref{thm:duality_ofdm} which is the virtual UL transmit power minimization problem.
To this end, we use $\mu_{i,u}(k)$ and $\lambda_{i,u}(k)$  interchangeably. 
Then ${\bf K}_{i,k}({\bf M}) = {\bf K}_{i,k}({\pmb \Lambda})$ in \eqref{eq:K_matrix2} can be considered as the covariance matrix of the UL received signal scaled by $1/\alpha$ in compact form.
Accordingly, we can rewrite the covariance matrix of the interference-plus-quantization-plus-nose term as $\bZ_{i,u}(k) = \alpha{\bf K}_{i,k}({\pmb \Lambda})-\alpha^2\lambda_{i,u}(k) {\bf g}_{i,i,u}(k){\bf g}_{i,i,u}^H$. 
From this, we find that the constraint in \eqref{eq:dual_SQINR_const_reduced_inv} is satisfied when $\gamma_{i,u}(k)\bZ_{i,u}(k) \succeq   \alpha^2 \lambda_{i,u}(k) {\bf g}_{i,i,u}(k){\bf g}_{i,i,u}^H$. 
This condition  can be further led to 
\begin{align}
        0\leq&{\bf g}_{i,i,u}^H(k)\bigg(\gamma_{i,u}(k) \bI_{N_b} \nonumber\\
        &\qquad- \alpha^2 \lambda_{i,u}(k) {\bf g}_{i,i,u}(k){\bf g}_{i,i,u}^H \bZ_{i,u}^{-1}(k)\bigg){\bf g}_{i,i,u}(k) \\
        =&{\bf g}_{i,i,u}^H(k){\bf g}_{i,i,u}(k)\bigg(\gamma_{i,u}(k) \nonumber\\
        &\qquad- \alpha^2\lambda_{i,u}(k){\bf g}_{i,i,u}^H(k)\bZ_{i,u}^{-1}(k){\bf g}_{i,i,u}(k)\bigg), 
    \end{align}
    which is equivalent to $\alpha^2 \lambda_{i,u}(k) {\bf g}_{i,i,u}^H \bZ_{i,u}^{-1}(k){\bf g}_{i,i,u}(k) \leq \gamma_{i,u}(k)$.

    If we consider the uplink SQINR constraint in \eqref{eq:ul_const} in a reversed way as 
\begin{equation}
    \label{eq:ul_const_reverse}
    \max_{\bff_{i,u}(k)}{\Gamma}^{\rm ul}_{i,u}(k) \leq \gamma_{i,u}(k) \;\;\forall i,u,k,
\end{equation}
    applying the MMSE combiner in \eqref{eq:mmse} and ${\pmb \Psi}_{N_b}(k){\pmb \Psi}_{N_b}^H(k) = \bI_{N_b}$ to \eqref{eq:ul_const_reverse} can  
    reduce \eqref{eq:ul_const_reverse} to $\alpha^2 \lambda_{i,u}(k) {\bf g}_{i,i,u}^H \bZ_{i,u}^{-1}(k){\bf g}_{i,i,u}(k) \leq \gamma_{i,u}(k)$.
    Therefore, we can claim that the condition \eqref{eq:dual_SQINR_const_reduced_inv} is identical to ${\Gamma}^{\rm ul}_{i,u}(k) \leq \gamma_{i,u}(k)$ when $\bff_{i,u}(k) = \bff_{i,u}^{\sf MMSE}(k)$.
    In other words, we have shown that the problem in \eqref{eq:dual_problem_reduced_inv} is the same as
    \begin{align}
        \label{eq:dual_problem_reduced_inv2}
        \max_{{\bD}_i }\max_{\lambda_{i,u}(k)}& \;\;  \sum\nolimits_{i,u,k}\lambda_{i,u}(k)\sigma^2\\
        \label{eq:dual_SQINR_const_reduced_inv2}
         {\text{ \rm subject to}} & \;\;{\Gamma}^{\rm ul}_{i,u}(k) \leq \gamma_{i,u}(k),\\
        &\;\; \bff_{i,u}(k) = \bff_{i,u}^{\sf MMSE}(k),\\
        \nonumber
        &\;\;  \bD_i \succeq 0, \ \bD_i\in \bbR^{N_b\times N_b}: \text{\rm diagonal}, \\
        &\;\;  {\rm tr}(\bD_i) \leq N_b \quad \forall i,u,k.
   	\end{align}
    
    The rest of the proof aims to show that the reformulated Lagrangian dual in \eqref{eq:dual_problem_reduced_inv2} 
    is equivalent to the following:
    \begin{align}
        \label{eq:dual_problem_reduced}
        \max_{{\bD}_i }\min_{\lambda_{i,u}(k)}& \;\;  \sum\nolimits_{i,u,k}\lambda_{i,u}(k)\sigma^2\\
        \label{eq:dual_SQINR_const_reduced}
        {\text{ \rm subject to}} & \;\;{\Gamma}^{\rm ul}_{i,u}(k) \geq \gamma_{i,u}(k),\\
        &\;\; \bff_{i,u}(k) = \bff_{i,u}^{\sf MMSE}(k),\\
        \nonumber
        &\;\;  \bD_i \succeq 0, \ \bD_i\in \bbR^{N_b\times N_b}: \text{\rm diagonal}, \\
        \nonumber
        &\;\;  {\rm tr}(\bD_i) \leq N_b \quad \forall i,u,k.
   	\end{align}
    The key differences between \eqref{eq:dual_problem_reduced_inv2} and \eqref{eq:dual_problem_reduced} are the opposite objective problems with respect to $\lambda_{i,u}(k)$, i.e., $\max$ vs. $\min$, and the reversed SQINR conditions in \eqref{eq:dual_SQINR_const_reduced_inv2} and \eqref{eq:dual_SQINR_const_reduced}. 
    Here, \eqref{eq:dual_SQINR_const_reduced} corresponds to the minimum virtual UL SQINR constraints for the minimizing problem with respect to $\lambda_{i,u}(k)$
    whereas \eqref{eq:dual_SQINR_const_reduced_inv2} denotes the maximum SQINR constraints for the maximization problem with respect to $\lambda_{i,u}(k)$.
    To relate the two problems, the optimality conditions are summarized in the following proposition, 
    \begin{proposition}
        \label{prop:1}
    The minimum objective with minimum SQINR constraints, i.e., \eqref{eq:dual_problem_reduced}-\eqref{eq:dual_SQINR_const_reduced} and the maximum objective with maximum SQINR constraints, i.e., \eqref{eq:dual_problem_reduced_inv2}-\eqref{eq:dual_SQINR_const_reduced_inv2} have their optimal solutions when the SQINR constraints are satisfied with equality.
    \begin{proof}
        Without loss of generality, we focus on the $u$th user in $i$th cell at $k$th subcarrier.
        Suppose that \eqref{eq:dual_problem_reduced} achieves the optimal solution while \eqref{eq:dual_SQINR_const_reduced} does not claim equality, i.e., ${\Gamma}^{\rm ul}_{i,u}(k) > \gamma_{i,u}(k)$.
        In this case, we can further decrease the UL SQINR by making $\lambda_{i,u}(k)$ smaller until equality condition holds, and hence the objective function can diminish, which leads to a contradiction.
        The case of \eqref{eq:dual_problem_reduced_inv2}-\eqref{eq:dual_SQINR_const_reduced_inv2} can be similarly proved.
    \end{proof}
    \end{proposition}
    From Proposition~\ref{prop:1}, we know that the optimal solutions for both problems are achieved with the same active SQINR constraints, and hence, the two problems are indeed equivalent with the identical objective function.
    Therefore, \eqref{eq:dual_problem_reduced_inv2} and \eqref{eq:dual_problem_reduced} are equivalent since  
    $\max$ of the inner problem can be replaced with $\min$, and both the SQINR inequality constraints can be replaced with equality constraints.    
    This implies that the Lagrangian dual of the DL problem can be considered as the virtual UL power minimization problem with outer maximization over $\bD_i$, which leads to an efficient solution in the sequel.
    This completes the proof of Theorem~1.
    \qed

\section{Proof of Corollary~\ref{cor:strongduality}}
\label{appx:strongduality}
First of all, the identity in \eqref{eq:per_antenna_const2} allows the primal DL problem in \eqref{eq:dl_problem} to be rewritten as
    \begin{gather}
        \label{eq:strong_pf}
        \min_{{\bw}_{i,u}(k),p_o} p_o \\ 
        \label{eq:strong_pf1}
        {\rm s.t.}\ \Gamma_{i,u}(k) \geq \gamma_{i,u}(k), \quad \forall i,u,k\\ 
        \label{eq:strong_pf2}
        \left[\frac{\alpha}{K}\sum\nolimits_{\ell=0}^{K-1}\bW_i(\ell)\bW_i^H(\ell)\right]_{m,m} \leq p_o.
     \end{gather}
    Let $\bW_{\rm BD}(k) = {\rm blkdiag}(\bW_1(k),\dots,\bW_{N_c}(k))$, $\tilde{\bW}_{\rm BD}(k) = {\rm blkdiag}((\bI_{KN_b}\otimes\bW_1(k)),\dots,(\bI_{KN_b}\otimes\bW_{N_c}(k)))$, $\tilde{\bW}_{\rm BD} = {\rm blkdiag}(\tilde{\bW}_{\rm BD}(0),\ldots, \tilde{\bW}_{\rm BD}(K-1))$, $\tilde{{\pmb \Psi}}_{N_b}(k)=\bI_{KN_b}\otimes{\pmb \Psi}_{N_b}(k)$, $\tilde{{\pmb \Psi}}_{N_b} = {\rm blkdiag}(\tilde{{\pmb \Psi}}_{N_b}(0),\ldots, \tilde{{\pmb \Psi}}_{N_b}(K-1))$, $\bE_{j,i,u}(k) = {\rm diag}({\pmb \Psi}_{N_b} \bg_{j,i,u}\bg_{j,i,u}^H{\pmb \Psi}_{N_b}^H)$, and $\bE_{i,u}(k) = {\bf 1}_K\otimes {\rm vec}(\bE^{1/2}_{1,i,u}(k),\ldots,\bE^{1/2}_{N_c,i,u}(k))$. 
    The SQINR constraints in \eqref{eq:strong_pf1} can be re-interpreted as
    \begin{align}
        \label{eq:strong_pf4}
        &\alpha^2\left(1+1/\gamma_{i,u}(k)\right) |{\bf w}_{i,u}^H(k) {\bf g}_{i,i,u}|^2 \\
        & \geq
        \left\| 
        \begin{matrix}
            \alpha\bW_{\rm BD}^H(k){\rm vec}(\bg_{1,i,u},\dots,\bg_{N_c,i,u}) \\
           \sqrt{\alpha(1-\alpha)}\tilde{\bW}_{\rm BD}(0)\tilde{{\pmb \Psi}}_{N_b}(0) \times \\
           {\rm vec}(\bE^{1/2}_{1,i,u}(k),\!..,\bE^{1/2}_{N_c,i,u}(k))\\
            \vdots \\
            \sqrt{\alpha(1-\alpha)}\tilde{\bW}_{\rm BD}(K-1)\tilde{{\pmb \Psi}}_{N_b}(K-1) \times\\
            {\rm vec}(\bE^{1/2}_{1,i,u}(k),\!..,\bE^{1/2}_{N_c,i,u}(k)) 
        \end{matrix}
        \right\|^2 + \sigma^2 \\
        &=  \left\|
        \begin{matrix}
            \alpha\bW_{\rm BD}^H(k){\rm vec}(\bg_{1,i,u},\dots,\bg_{N_c,i,u}) \\
            \sqrt{\alpha(1-\alpha)} \tilde{\bW}_{\rm BD} \tilde{{\pmb \Psi}}_{N_b} \bE_{i,u}(k)
        \end{matrix}
        \right\|^2 + \sigma^2,
    \end{align}
    for all $i$, $u$, and $k$. In addition, the per-antenna constraint in \eqref{eq:strong_pf2} is rewritten as
    \begin{align}
        &\left[\frac{\alpha}{K}\sum\nolimits_{\ell=0}^{K-1}\bW_i(\ell)\bW_i^H(\ell)\right]_{m,m}
        \nonumber \\
        & = \frac{\alpha}{K} \left\|{\rm vec}(\be_{m}^H\bW_i(0),\ldots, \be_{m}^H\bW_i(K-1)) \right\|^2.
    \end{align}
	for all $m$. 
    Accordingly,  the primal DL problem in \eqref{eq:dl_problem} can be cast to the SOCP. 
	Since \eqref{eq:dl_problem} is strictly feasible and convex, 
	strong duality holds between \eqref{eq:dl_problem} and \eqref{eq:dual_problem}. 
\qed

\section{Proof of Corollary~\ref{cor:dl_precoder_ofdm}}
\label{appx:dl_precoder_ofdm}
We first find the derivative of the Lagrangian in  \eqref{eq:lagrangian_reform} with respect to $\bw_{i,u}(k)$ as
\begin{align}
    \label{eq:derivative}
    &\frac{\partial {\cL(\bw_{i,u}(k),\lambda_{i,u}(k),\nu_{i,m}(\ell))}}{\partial \bw_{i,u}(k)}
    \\\nonumber
    &=2\Big(\frac{\alpha}{K}\tilde\bD_{i}\! -\! \alpha^2\Big(1+1/\gamma_{i,u}(k)\Big)\lambda_{i,u}(k)\bg_{i,i,u}(k)\bg_{i,i,u}^H(k)  
    \\\nonumber
    &\quad +\alpha^2\!\sum\nolimits_{j,v}\lambda_{j,v}(k)\bg_{i,j,v}(k)\bg_{i,j,v}^H(k)
    \\\nonumber
    &\quad +\!\alpha(1-\alpha){\pmb \Psi}_{N_b}(k){\rm diag}\!\left({\pmb \Psi}_{N_b}^H\bG_i\,{\pmb\Lambda}\,\bG_i^H{\pmb \Psi}_{N_b}\right)\!{\pmb \Psi}_{N_b}^H(k)\!\Big)\bw_{i,u}(k).
\end{align}
We then set the derivative of the Lagrangian to zero, and solve it for $\bw_{i,u}(k)$ as
\begin{align}
    \nonumber
    \bw_{i,u}(k) &=  \bigg( \!\alpha^2\!\!\!\!\!\sum_{(j,v)\neq (i,u)} \!\!\!\lambda_{j,v}(k) {\bf g}_{i,j,v}(k){\bf g}_{i,j,v}^H(k)\! 
    \\\nonumber
    &\quad +\! \alpha(1\!-\!\alpha){\pmb \Psi}_{N_b}(k){\rm diag}\!\left({\pmb \Psi}_{N_b}^H\bG_i\,{\pmb\Lambda}\,\bG_i^H{\pmb \Psi}_{N_b}\right)\!{\pmb \Psi}_{N_b}^H(k)\! 
    \\\nonumber
    & \quad +\! \alpha \bD_i\! \bigg)^{-1}\!\!\frac{\alpha^2}{\gamma_{i,u}(k)}\!\lambda_{i,u}(k){\bf g}_{i,i,u}(k)\bg_{i,i,u}^H(k)\bw_{i,u}(k)   
    \\\nonumber
    & \stackrel{(a)}= \left(\alpha^2/\gamma_{i,u}(k)\right)\lambda_{i,u}(k)\bg_{i,i,u}^H(k)\bw_{i,u}(k) \bff_{i,u}^{\sf MMSE}(k)
\end{align}
where $(a)$ uses the MMSE equalizer $\bff_{i,u}^{\sf MMSE}(k)$ defined in \eqref{eq:mmse}. 
As a result, we argue $\bw_{i,u}(k) = \sqrt{\tau_{i,u}(k)}\bff_{i,u}^{\sf MMSE}(k)$ with properly computed $\tau_{i,u}(k)$.

To satisfy the Karush–Kuhn–Tucker stationarity condition with the DL constraint in \eqref{eq:dl_SQINR}, $\Gamma_{i,u}(k)$ has to meet the target SQINR constraint with strict equality.
For given $i,u,k$, let us define $\xi_{i',u'}(n)=1$ if $(i',u',n)=(i,u,k)$, and $\xi_{i',u'}(n)=0$ otherwise. 
We collect $\xi_{i,u}$'s as $\Xi_i(k) = {\rm diag}(\xi_{i,1}(k),\dots,\xi_{i,N_u}(k))$, ${\Xi(k)} = {\rm blkdiag}({\Xi}_1,\dots, {\Xi}_{N_c})$, and ${{\Xi}} = {\rm blkdiag}({\Xi(0)}\dots, {\Xi}(K-1))$.
We then rewrite the quantization error term in \eqref{eq:SQINR_dl_ofdm} in a tractable form as
\begin{align}
    \nonumber
    &{\rm Q}_{i,u}(k)
    \\\nonumber
    &\stackrel{(a)}= \alpha(1-\alpha)\sum_{j=1}^{N_c}\bg_{j,i,u}^H\!(k){\pmb \Psi}_{N_b}
    \\\nonumber
    &\quad \times {\rm diag}\big({\pmb \Psi}_{N_b}^H\bW_j\bW_j^H{\pmb \Psi}_{N_b}\big){\pmb \Psi}_{N_b}^H\bg_{j,i,u}\!(k) 
    \\\nonumber
    &=\alpha(1-\alpha)\sum\nolimits_{i',u',n,j}\xi_{i',u'}(n)\bg_{j,i',u'}^H\!(n){\pmb \Psi}_{N_b} 
    \\\nonumber
    &\quad\times {\rm diag}\big({\pmb \Psi}_{N_b}^H\bW_j\bW_j^H{\pmb \Psi}_{N_b}\big){\pmb \Psi}_{N_b}^H\bg_{j,i',u'}(n)
    \\ \nonumber
    &=\alpha(1-\alpha)\sum\nolimits_{j,v,\ell}\bw_{j,v}^H(\ell){\pmb \Psi}_{N_b}(\ell)  
    \\\nonumber
    &\quad \times {\rm diag}\left({\pmb \Psi}_{N_b}^H\bG_j\,{\Xi}\,\bG_j^H{\pmb \Psi}_{N_b}\right){\pmb \Psi}_{N_b}^H(\ell)\bw_{j,v}(\ell)
    \\\nonumber  
    &\stackrel{(b)}= \alpha(1-\alpha)\sum\nolimits_{j,v,\ell}\bw_{j,v}^H(\ell){\pmb \Psi}_{N_b}(\ell)
    \\\label{eq:QN_reform_ofdm} 
    &\quad\times {\rm diag}\big({\pmb \Psi}_{N_b}^H\bg_{j,i,u}\!(k)\bg_{j,i,u}^H\!(k)^H{\pmb \Psi}_{N_b}\big){\pmb \Psi}_{N_b}^H(\ell)\bw_{j,v}^H(\ell), 
\end{align}
where $(a)$ is obtained by substituting \eqref{eq:Cqq_dl_ofdm} into \eqref{eq:quantizationnoise} and
$(b)$ comes from the fact that $\bG_j\,{{\Xi}}\,\bG_j^H = \bg_{j,i,u}\!(k)\bg_{j,i,u}^H(k)$ since ${\Xi}$ can activate the $(kN_u+u)$th column of $\bG_{j}$ only.
	
Accordingly, the active DL SQINR constraint is rewritten as follows:
\begin{align}
    \nonumber
    \sigma^2=&\frac{\alpha^2}{\gamma_{i,u}(k)} |{\bf g}_{i,i,u}^H(k){\bf w}_{i,u}(k)|^2 
    \\\nonumber
    &-\alpha^2 \sum_{(j,v) \neq (i,u)}^{N_c, N_u}\!|\bg_{j,i,u}^H(k)\bw_{j,v}(k)|^2  -{\rm Q}_{i,u}(k) 
    \\\nonumber
    \stackrel{(a)}=&  \frac{\alpha^2}{\gamma_{i,u}} |{\bf g}_{i,i,u}^H(k)\bff_{i,u}^{\sf MMSE}(k)|^2 \tau_{i,u}(k) 
    \\\nonumber
    &-  \alpha^2 \sum_{(j,v) \neq (i,u)} {| {\bf g}_{j,i,u}(k) \bff_{j,v}^{\sf MMSE}(k)^H |^2} \tau_{j,v}(k) 
    \\ \nonumber
    &-\alpha(1-\alpha)\!\sum\nolimits_{j,v,\ell}\!\tau_{j,v}(\ell)\bff_{j,v}^{\sf MMSE}(\ell)^H{\pmb \Psi}_{N_b}(\ell)
    \\ \label{eq:dl_active}
    &\times {\rm diag}\big({\pmb \Psi}_{N_b}^H\bg_{j,i,u}\!(k)\bg_{j,i,u}^H\!(k){\pmb \Psi}_{N_b}\big){\pmb \Psi}_{N_b}^H(\ell)\bff_{j,v}^{\sf MMSE}(\ell)^H,
\end{align} 
for all $i$, $u$, $k$ where $(a)$ is from \eqref{eq:QN_reform_ofdm} and $\bw_{i,u}(k) = \sqrt{\tau_{i,u}(k)}\bff_{i,u}^{\sf MMSE}(k)$. 
Representing \eqref{eq:dl_active} for all $i,u,k$ gives  $\sigma^2{\bf 1} = {\bSigma}{\btau} $, thereby having $\tau_{i,u}(k)$ by solving ${\btau}=\sigma^2 {\bSigma}^{-1}{\bf 1}$.
\qed

\section{Proof of Corollary~\ref{cor:solution_ofdm}}
\label{appx:solution_ofdm}
\eqref{eq:solution_ofdm} is obtained by setting the Lagrangian in \eqref{eq:derivative} to zero, solving for $\lambda_{i,u}(k)$, and replacing $\frac{\alpha}{K}\tilde\bD_{i}$ with $\bD_i$. 
Thus, the solution of $\lambda_{i,u}(k)$ satisfies the stationary condition, and we observe that the UL SQINR constraint in \eqref{eq:per_antenna_const}
is active at the solution satisfying the complementary slackness condition. Therefore, \eqref{eq:solution_ofdm} is optimal solution of the UL OFDM problem.
\qed

\section{Proof of Corollary~\ref{cor:subgradient}}
\label{appx:subgradient}
Using the BF duality between UL and DL problems shown in \cite{dahrouj2010coordinated} and \cite{choi2020quantized},  $f\!\left(\bD_i\right)$ in \eqref{eq:subproblem} for fixed $\bD_i$ can be transformed to the following DL BF problem:
	\begin{align}
		\label{eq:subproblem1}
		f\!\left(\bD_i\right)=&\min_{\bw_{i,u}(k)} \;\;  \sum\nolimits_{u,k}\bw_{i,u}^H(k)\bD_i\bw_{i,u}(k)\\
		\nonumber
		&{\text{ \rm subject to}}  \;\; \Gamma_{i,u}(k) \geq \gamma_{i,u}(k) \;\;\forall i,u,k.
	\end{align}
	We introduce two arbitrary diagonal covariance matrices $\bD_i$ and $\bD'_i$ whose associated optimal solution of \eqref{eq:subproblem1} is denoted as $\bw_{i,u}(k)$ and $\bw'_{i,u}(k)$, respectively.
	We can then discover the following inequality on the objective function:
	\begin{align}
		&f\!\left(\bD'_i\right)-f\!\left(\bD_i\right) \nonumber\\
  &=\sum\nolimits_{u,k}\bw_{i,u}'^{H}(k)\bD_i'\bw_{i,u}'(k)-\sum\nolimits_{u,k}\bw_{i,u}^H(k)\bD_i\bw_{i,u}(k) \nonumber
		\\
		&\stackrel{(a)}{\leq}\sum\nolimits_{u,k}\bw_{i,u}^H(k)\bD_i'\bw_{i,u}(k)-\sum\nolimits_{u,k}\bw_{i,u}^H(k)\bD_i\bw_{i,u}(k) \nonumber
		\\
		&\stackrel{(b)}={\rm tr}\Big({\rm diag}\Big(\sum\nolimits_{u,k}\bw_{i,k}(k)\bw_{i,k}^H(k)\Big)\left(\bD_i'-\bD_i\right)\Big),
	\end{align}
	where $(a)$ holds because associating $\bD_i'$ with $\bw_{i,u}(k)$ cannot decrease the objective function from $\bw'_{i,u}(k)$, and $(b)$ follows because $(\bD_i'-\bD_i)$ is diagonal.
	From the definition of a subgradient: $\bA$ is a subgradient if $f(\bD_{i}') \leq f(\bD_i)  + {\rm tr}(\bA(\bD_i'-\bD_i))$, \eqref{eq:subgradient} is a subgradient of $f(\bD_i)$. 
 \qed

\section{Proof of Corollary~\ref{cor:convergence}}
\label{appx:convergence}

        The proof is based on the standard function approach presented in \cite{choi2020quantized,dahrouj2010coordinated,wiesel2005linear,yates1995framework}.
    Let us represent \eqref{eq:solution_ofdm} as $\lambda_{i,u}^{(n+1)}(k) = \mathcal{F}_{i,u,k}(\bLambda^{(n)})$. 
    We show that $\mathcal{F}_{i,u,k}(\bLambda)$ is a standard function which satisfies the followings:
    \begin{itemize}
        \item (Positivity) If $\lambda_{i,u}(k) \geq 0$ $\forall i,u,k$, then $\mathcal{F}_{i,u,k}({\bLambda}) > 0$.
        \begin{proof}
        By limiting $\lambda_{i,u}^{(n+1)}(k)$ to be non-negative, we have ${\bf K}_{i,k}({\pmb \Lambda}) \succ 0$; hence ${\bf K}_{i,k}^{-1}({\pmb \Lambda}) \succ 0$. This makes the denominator of \eqref{eq:solution_ofdm} strictly positive.
        \end{proof}

        \item (Monotonicity) If $\lambda_{i,u}(k) \geq \lambda_{i,u}'(k) \;\forall i,u,k$, then $\mathcal{F}_{i,u,k}({\bLambda}) \geq \mathcal{F}_{i,u,k}({\bLambda}')$.  
       \begin{proof}
            We define the signal and quantization noise part of the covariance matrix as
            \begin{align}
                &\bR({\pmb \Lambda})=\alpha \!\sum_{j,v} \!\lambda_{j,v}(k){\bf g}_{i,j,v}(k){\bf g}_{i,j,v}^H(k) \nonumber\\
                &\,\,\, +  (1\!-\!\alpha){\pmb \Psi}_{\!N_b}\!(k){\rm diag}\Big(\!{\pmb \Psi}_{N_b}^H\bG_i {\pmb \Lambda} \bG_i^H{\pmb \Psi}_{\!N_b}\!\Big){\pmb \Psi}_{\!N_b}^H\!(k)
            \end{align}
            Since ${\pmb \Lambda}$ is a diagonal matrix, the covariance matrix of the received signal is rewritten as
            \begin{align}
            {\bf K}_{i,k}({\pmb \Lambda}) = {\bf D}_{i}  + \bR({\pmb \Lambda}) = {\bf D}_{i} + \bR({\pmb \Lambda}')  + \bR({\pmb \Lambda}-{\pmb \Lambda}').
        \end{align}
            Due to the condition of $\lambda_{i,u}(k) \geq \lambda_{i,u}'(k)$, we can notice that $\bR(({\pmb \Lambda}-{\pmb \Lambda}')) \succeq 0$. 
            In addition, we know ${\bf D}_{i}  + \bR({\pmb \Lambda}) \succeq 0$ and ${\bf g}_{i,i,u}^H(k)$ is in the range of ${\bf D}_{i}  + \bR({\pmb \Lambda}')$ .
            We can recognize
            \begin{align}
                \nonumber
                &\mathcal{F}_{i,u,k}({\bLambda})= 
                \\\nonumber
                 &\frac{1}{\alpha \!\left(\!1\!+\!\frac{1}{\gamma_{i,u}\!(k)}\!\right)\!{\bf g}_{i,i,u}^H\!(k)\!\left({\bf D}_{i}\! +\! \bR({\pmb \Lambda}') \! +\! \bR({\pmb \Lambda}\!-\!{\pmb \Lambda}')\right)^{-1}\!\!\!\!{\bf g}_{i,i,u}\!(k)} 
                \\\nonumber
                &\geq \frac{1}{\alpha \left(1+\frac{1}{\gamma_{i,u}(k)}\right){\bf g}_{i,i,u}^H(k)\left({\bf D}_{i} + \bR({\pmb \Lambda}')\right)^{-1}{\bf g}_{i,i,u}(k)} 
                \\\nonumber
                &=\mathcal{F}_{i,u,k}({\bLambda'})
            \end{align} 
            by Proposition 4 of \cite{wiesel2005linear}.
        \end{proof}
        \item (Scalability) For $\rho > 1$, $\rho \mathcal{F}_{i,u,k}({\bLambda}) > \mathcal{F}_{i,u,k}(\rho{\bLambda})$.
       \begin{proof}
        We have $(\rho-1){\bf D}_{i}$ is still a semidefinite matrix. 
        By Proposition 4 of \cite{wiesel2005linear}, the following relationship holds:
        \begin{align}
            \nonumber
            &\rho\mathcal{F}_{i,u,k}({\bLambda}) 
            \\\nonumber
            &= \frac{1}{\alpha \left(1+\frac{1}{\gamma_{i,u}(k)}\right){\bf g}_{i,i,u}^H(k)\left(\rho{\bf D}_{i} + \rho\bR({\pmb \Lambda})\right)^{-1}{\bf g}_{i,i,u}(k)}
            \\\nonumber
            &\geq \frac{1}{\alpha \left(1+\frac{1}{\gamma_{i,u}(k)}\right){\bf g}_{i,i,u}^H(k)\left({\bf D}_{i} +\rho\bR({\pmb \Lambda}) \right)^{-1}{\bf g}_{i,i,u}(k)} 
            \\\nonumber
            &=\mathcal{F}_{i,u,k}({\rho\bLambda})
        \end{align} 
        We can realize that the equality condition holds when $\rho=1$, which is not feasible.
        \end{proof}
    \end{itemize}

\end{appendices}

 \bibliographystyle{IEEEtran}
 \bibliography{CoMP_ADCs.bib}

\end{document}